

\documentclass[11pt]{article}
\def\confversion{0}


\usepackage{ifthen}

\newcommand{\ignore}[1]{}

\ifthenelse{\equal{\confversion}{1}}
{
	\newcommand{\conf}[1]{#1}
}
{
	\newcommand{\conf}[1]{\ignore{#1}}	
}
\ifthenelse{\equal{\confversion}{0}}
{
	\newcommand{\full}[1]{#1}
}
{
	\newcommand{\full}[1]{\ignore{#1}}	
}


\full{\usepackage{fullpage}}
\usepackage{graphicx}
\usepackage[pdfencoding=auto]{hyperref}
\hypersetup{
	colorlinks,
	linkcolor={red!50!black},
	citecolor={blue!50!black},
	urlcolor={blue!80!black}
}
\usepackage{xspace}
\usepackage{enumerate}
\usepackage{enumitem}
\usepackage{xcolor}

\usepackage{array}
\usepackage{mathdots}
\full{\usepackage{amsthm,amssymb,amsmath}}
\usepackage{amssymb,amsmath}
\usepackage{mathtools}
\usepackage{algorithm}
\usepackage{algpseudocode}
\usepackage{mathrsfs}
\usepackage[nameinlink]{cleveref}

\full{
  \newtheorem{theorem}{Theorem}
  \theoremstyle{definition}
  \newtheorem{definition}{Definition}
  \newtheorem{lemma}{Lemma}
  
  \newtheorem{corollary}{Corollary}
    \newtheorem*{theorem*}{\bf Informal Theorem}
   
   \newtheorem{proposition}{Proposition}
   
}
\newtheorem{fact}{Fact}
%
\conf{\usepackage[pdfencoding=auto]{hyperref}
}

\newcommand{\etal}{{\em et al}.~}
\newcommand{\cgk}{\textsf{CGK}\xspace}

\newcommand{\cF}{\mathscr{F}}
\newcommand{\cC}{\mathcal{C}}
\newcommand{\cI}{\mathcal{I}}

\newcommand{\cT}{\mathcal{T}}
\newcommand{\R}{\mathbb{R}}
\newcommand{\N}{\mathbb{N}}
\newcommand{\w}{\mathsf{w}}
\newcommand{\m}{\mathsf{m}}

\newcommand{\cov}{\mathsf{cov}}
\newcommand{\optval}{{\rho}}
\newcommand{\chld}{{\mathsf{Child}}}
\newcommand{\leaf}{{\mathsf{Leaf}}}
\newcommand{\HS}{\textsf{HS}\xspace}
\newcommand{\CovP}{\mathscr{P}^\cI_\cov}
\newcommand{\tree}{\text{$(L_1,L_2,\leaf,\w)$}\xspace}

\newcommand{\nukc}{\textsf{NU$k$C}\xspace}
\newcommand{\trkc}{\textsf{$2$-NU$k$C}\xspace}

\newcommand{\trkco}{\textsf{Robust $2$-NU$k$C}\xspace}
\newcommand{\trkcoinst}{\text{$((X,d),(r_1,r_2),(k_1,k_2),\m)$}\xspace}
\newcommand{\tlff}{\textsf{2-FF}\xspace}
\newcommand{\tlffinst}{\text{$(\tree,(k_1,k_2))$}\xspace}
\newcommand{\septrkco}{\textsf{Well-Separated Robust \trkc}\xspace}

\newcommand{\bx}{\mathbf{x}}
\newcommand{\by}{\mathbf{y}}
\begin{document}
%
\title{Robust $k$-Center with Two Types of Radii}
%
%
\author{Deeparnab Chakrabarty\thanks{Partially supported by NSF grant \#1813053} \and Maryam Negahbani}
%
\conf{\authorrunning{D. Chakrabarty and M. Negahbani}}
%
\conf{\institute{Dartmouth College, Hanover NH 03755, USA \\\email{deeparnab@dartmouth.edu,~} \email{maryam@cs.dartmouth.edu}}}

\date{}
\maketitle              
\begin{abstract}
In the non-uniform $k$-center problem, the objective is to cover points in a metric space with 
specified number of balls of different radii. Chakrabarty, Goyal, and Krishnaswamy [ICALP 2016, Trans.\ on Algs.\ 2020] (CGK, henceforth) give a
constant factor approximation when there are two types of radii. In this paper, we give a constant factor approximation
for the two radii case in the presence of outliers. To achieve this, we need to bypass the technical barrier of bad integrality gaps in the CGK approach. We do so using ``the ellipsoid method inside the ellipsoid method'': use an outer layer of the ellipsoid method to reduce to stylized instances and use an inner layer of the ellipsoid method to solve these specialized instances. This idea is of independent interest and could be applicable to other problems.

\conf{\keywords{Approximation \and Clustering  \and Outliers \and Round-or-Cut.}}
\end{abstract}
\thispagestyle{empty}
\noindent
\section{Introduction}\label{sec:intro}

In the non-uniform $k$-center (\nukc) problem, one is given a metric space $(X,d)$ and balls of different radii $r_1 > \cdots > r_t$, with $k_i$ balls of radius type $r_i$.
The objective is to find a placement $C\subseteq X$ of centers of these $\sum_i k_i$ balls, such that they cover $X$ with as little dilation as possible. More precisely,
for every point $x\in X$ there must exist a center $c\in C$ of some radius type $r_i$ such that $d(x,c) \leq \alpha\cdot r_i$ and the objective is to find $C$ with $\alpha$ as small as possible.

Chakrabarty, Goyal, and Krishnaswamy~\cite{CGK16} introduced this problem as a generalization to the vanilla $k$-center problem~\cite{Gon85,HS85,HS86} which one obtains with only one type of radius.
 One motivation arises from source location and vehicle routing: imagine you have a fleet of $t$-types of vehicles of different speeds and your objective is to find depot locations so that any client point can be served as fast as possible. This can be modeled as an \nukc problem. The second motivation arises in clustering data. The $k$-center objective forces one towards clustering with equal sized balls, while the \nukc objective gives a more nuanced way to model the problem. Indeed, \nukc generalizes the {\em robust} $k$-center problem~\cite{CKMN01} which allows the algorithm to throw away $z$ points as outliers.
 This is precisely the \nukc problem with two types of radii, $r_1 = 1$, $k_1 = k$, $r_2 = 0$, and $k_2 = z$.

Chakrabarty et al.~\cite{CGK16} give a $2$-approximation for the special case of robust $k$-center which is the best possible~\cite{HS85,Gon85}. Furthermore, they give a $(1+\sqrt{5})$-factor approximation algorithm for the \nukc problem with two types of radii (henceforth, the \trkc problem).~\cite{CGK16} also prove that when $t$, the number of types of radii, is part of the input, there is no constant factor approximation algorithms unless P=NP. They explicitly leave open the case when the number of different radii types is a constant, conjecturing that constant-factor approximations should be possible. We take the first step towards this by looking at the {\em robust} \trkc problem. That is, the \nukc problem with two kinds of radii when we can throw away $z$ outliers. This is the case of $3$-radii with $r_3 = 0$.

\begin{theorem}\label{thm:mainthm}
	There is a $10$-approximation for the {\em \trkco} problem.
\end{theorem}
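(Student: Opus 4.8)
By scaling and by enumerating the $O(|X|^2)$ candidate values of the optimal dilation, it suffices to do the following for a fixed guess: either output a placement of $k_1$ balls of radius $10r_1$ and $k_2$ balls of radius $10r_2$ covering the required $\m$ points, or correctly certify that no placement of $k_1$ radius-$r_1$ and $k_2$ radius-$r_2$ balls covers $\m$ points. The global strategy is \emph{round-or-cut}: fix a polytope $\mathcal{P}$ of fractional solutions --- opening variables $y^{(i)}_v$ for a radius-$r_i$ ball at $v$, assignment variables $x^{(i)}_{uv}$ with $x^{(i)}_{uv}\le y^{(i)}_v$, budgets $\sum_v y^{(i)}_v\le k_i$, and coverage $\sum_{u}\sum_{i,v}x^{(i)}_{uv}\ge \m$ --- so that every integral feasible solution lies in $\mathcal{P}$. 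We run the ellipsoid method on $\mathcal{P}$; all the work is in the separation oracle. One cannot just solve an LP and round: the natural relaxations here have unbounded integrality gap, which is exactly the technical barrier to bypass, and it is also why a brute-force guess of the per-cluster radius profile (exponentially many options) is off the table.

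\textbf{Outer layer: reduce to a stylized instance.} Given a query point $\bx\in\mathcal{P}$, the oracle first runs a CGK/robust-$k$-center-style filtering: greedily select a maximal family $Y$ of ``representatives'' that are heavily fractionally covered and pairwise more than a constant times $r_1$ apart, so that every point is within a small multiple of its own radius of some representative. This collapses the query into a \septrkco instance~\septrkcoinst on $Y$, in which the representatives are far apart. The point of being well-separated is structural: a radius-$r_1$ ball can usefully ``reach'' only a bounded, nested collection of representatives, and any residual demand at a representative must be met by radius-$r_2$ balls in its vicinity --- so feasibility of the well-separated instance can be phrased as a covering/flow problem on a two-level tree \tree, i.e.\ a \tlff instance~\tlffinst.

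\textbf{Inner layer: solve the stylized instance.} The tree instance \emph{also} has a relaxation with a bad integrality gap, so we run a \emph{second} round-or-cut, now over a tree polytope $\mathcal{Q}$. Its oracle uses the tree structure: given a fractional tree solution, either a clean bottom-up rounding (in the spirit of laminar / firefighter roundings) returns an integral choice of $k_1$ radius-$r_1$ and $k_2$ radius-$r_2$ balls paying only a small additive dilation, \emph{or} a natural valid inequality over some subtree (a knapsack-cover / counting inequality satisfied by every integral tree solution) is violated, and we return it as a cut. If the inner ellipsoid declares $\mathcal{Q}$ empty, we translate the infeasibility certificate --- a nonnegative combination of those tree inequalities --- back through the filtering correspondence into an inequality valid for $\mathrm{conv}(\text{integral solutions of }\mathcal{P})$ but violated by $\bx$; this is the cut handed to the \emph{outer} ellipsoid. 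If the inner round-or-cut instead succeeds, we map each selected tree ball to an actual ball in $(X,d)$: a representative served at tree-cost $t$ corresponds to a real point covered within (filtering loss)$\,+\,t$, and tallying the filtering loss plus the tree-rounding loss yields dilation at most $10$ for both radius types. If the outer ellipsoid ever declares $\mathcal{P}$ empty, the current guess is below the optimum and we discard it; the smallest guess that is not discarded yields the $10$-approximation.

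\textbf{Main obstacle.} The crux is the interface between the two layers. We must build the tree encoding and the family of tree inequalities so that (i) whenever the inner rounding fails there is a genuinely violated inequality holding for \emph{all} integral tree solutions (otherwise the inner loop is not a valid round-or-cut and could wrongly declare infeasibility), and (ii) an emptiness certificate for $\mathcal{Q}$ \emph{pulls back} to a valid, violated cut for $\mathrm{conv}$ of the integral solutions of $\mathcal{P}$ --- that is, the reduction to stylized instances must be \emph{certificate-preserving}, not merely approximation-preserving. Arranging a reduction with this property while keeping the tree clean enough that a bottom-up rounding loses only $O(1)$ dilation is the heart of the argument; the remainder is the standard filtering machinery for robust $k$-center together with a bookkeeping of constants.
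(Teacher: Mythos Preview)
Your high-level architecture (two nested round-or-cut loops) matches the paper, but the plan has a genuine gap at exactly the point you flag as the ``main obstacle.'' You propose to collapse an arbitrary query $\bx$ into a \emph{single} \septrkco instance via filtering and then, if the inner loop fails, to ``pull back'' the tree infeasibility certificate to a cut for the outer integer hull. This does not work: the paper exhibits (\Cref{fig:gaplvl2legitmetric}) a $\cov$ in the integer hull $\CovP$ for which the CGK-produced \tlff instance is \emph{not} valuable, so the reduction is not certificate-preserving and no generic pullback of tree inequalities can be valid. The paper's fix is not a cleverer pullback but a different mechanism altogether: after filtering with $\alpha_1=8$, it branches on whether $\pcov_1(L_1)\le k_1-2$. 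If so, the ABZ ``narrow gap'' lemma (\Cref{lma:loose}) guarantees the \tlff instance is valuable outright (\Cref{clm:case1}) and you round immediately---no inner ellipsoid needed. If not, the paper does \emph{not} form one stylized instance; it \emph{enumerates} $O(|X|)$ well-separated instances (\Cref{clm:case2}), one for each possible single $r_1$-center far from $L_1$, and runs the inner ellipsoid on each. Only if \emph{all} of them are certified infeasible does one conclude that the outer cut $\cov_1(L_1)\le k_1-2$ is valid for $\CovP$; this validity is argued directly from the enumeration, not by translating tree certificates.

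A second, smaller gap is your inner-layer mechanism. You describe a ``bottom-up rounding'' of a fractional tree solution with knapsack-cover cuts on failure. The paper does neither: it checks valuability of the \tlff instance \emph{exactly} by dynamic programming, and if it is not valuable, the well-separated hypothesis (specifically \Cref{clm:nonfrac}) is what makes the single inequality $\sum_{v\in L_2}\w(v)\cov(v)\le \m-1$ valid for $\CovP$ (\Cref{lma:hyp:covchld}). Without well-separatedness this inequality is false, which is again why the outer layer must first manufacture well-separated instances via the case split and enumeration rather than rely on a generic certificate-preserving reduction.
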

\noindent
Although the above theorem seems a modest step towards the CGK conjecture, it is in fact a non-trivial one which bypasses multiple technical barriers in the~\cite{CGK16} approach.
To do so, our algorithm applies a two-layered round-or-cut framework, and it is foreseeable that this idea will form a key ingredient for the constantly many radii case as well.
In the rest of this section, we briefly describe the~\cite{CGK16} approach, the technical bottlenecks one faces to move beyond $2$ types of radii, and our approach to bypass them. A more detailed description appears in~\Cref{sec:prelims}.

One key observation of~\cite{CGK16} connects \nukc with the {\em firefighter problem on trees}~\cite{FKMR07,CC10,ABZ16}. In the latter problem, one is given a tree where there is a fire at the root.
The objective is to figure out if a specified number of firefighters can be placed in each layer of the tree, so that the leaves can be saved. To be precise, the objective is to select $k_i$ nodes from layer $i$ of the tree so that every leaf-to-root path contains at least one of these selected nodes. 

Chakrabarty et al.~\cite{CGK16}  use the {\em integrality} of a natural LP relaxation for the firefighter problem on height-$2$ trees to obtain their constant factor approximation for \trkc. In particular, they show how to convert a fractional solution of the standard LP relaxation of the \trkc problem to a feasible fractional solution for the firefighter LP. Since the latter LP is integral for height-2 trees, they obtain an integral firefighting solution from which they construct an $O(1)$-approximate solution for the \trkc problem. Unfortunately, this idea breaks down in the presence of outliers as the firefighter LP on height-2 trees {\em when certain leaves can be burnt} (outlier leaves, so to speak) is not integral anymore.
In fact, the standard LP-relaxation for \trkco has unbounded integrality gap. This is the first bottleneck in the CGK approach.

Although the LP relaxation for the firefighter problem on height-2 trees is not integral when some leaves can be burnt, the problem itself (in fact for any constant height) is solvable in polynomial time using dynamic programming (DP). Using the DP, one can then obtain~(see, for instance, \cite{Kaibel11}) a {\em polynomial sized integral} LP formulation for the firefighting problem. 
 This suggests the following enhancement of the CGK approach using the ellipsoid method. Given a fractional solution $\bx$ to \trkco, use the CGK approach to obtain a fractional solution $\by$ to the firefighting problem. If $\by$ is feasible for the integral LP formulation, then we get an integral solution to the firefighting problem which in turn gives an $O(1)$-approximation for the \trkco instance via the CGK approach. Otherwise, we would get a {\em separating hyperplane} for $\by$ and the poly-sized integral formulation for firefighting. 
If we could only use this to separate the fractional solution $\bx$ from the integer hull of the \trkco problem, then we could use the ellipsoid method to approximate \trkco. This is the so-called ``round-or-cut'' technique in approximation algorithms. 

Unfortunately, this method also fails and indicates a much more serious bottleneck in the CGK approach. Specifically, there is an instance of \trkco and an $\bx$ in the integer hull of its solutions, such that the firefighting instance output by the CGK has {\em no} integral solution! Thus, one needs to enhance the CGK approach in order to obtain $O(1)$-approximations even for the \trkco problem. The main contribution of this paper is to provide such an approach. We show that if the firefighting instance does not have an integral solution, then we can tease out many stylized \trkco instances on which the round-or-cut method provably succeeds, and an $O(1)$-approximation to any one of them gives an $O(1)$-approximation to the original \trkco instance. 

\medskip

\noindent
{\bf \em Our Approach.} Any solution $\bx$ in the integer hull of \nukc solutions gives an indication of where different radii centers are opened. As it turns out, the key factor towards obtaining algorithms for the \trkco problem is observing where the large radii (that is, radius $r_1$) balls are opened. Our first step is showing that if the fractional solution $\bx$ tends to open the $r_1$-centers only on ``well-separated'' locations then in fact, the round-or-cut approach described above works. More precisely, if the \trkco instance is for some reason forced to open its $r_1$ centers on points which are at least $cr_1$ apart from each other for some constant $c > 4$, then the CGK approach plus round-or-cut leads to an $O(1)$-approximation for the \trkco problem. 
We stress that this is far from trivial and the natural LP relaxations have bad gaps even in this case. We use our approach from a previous paper~\cite{CN18} to handle these well-separated instances.

But how and why would such  well-separated instances arise? This is where we use ideas from recent papers on fair colorful clustering~\cite{BIPV19,JSS20,AAKZ20}. If $\bx$ suggested that the $r_1$-radii centers are not well-separated, then one does not need that many balls if one allows dilation. In particular, if $p$ and $q$ are two $r_1$-centers of a feasible integral solution,  and $d(p,q) \leq cr_1$, then just opening one ball at either $p$ or $q$ with radius $(c+1)r_1$ would cover every point that they each cover with radius $r_1$-balls. 
Thus, in this case, the approximation algorithm gets a ``saving'' in the budget of how many balls it can open. We exploit this savings in the budget by utilizing yet another observation from Adjiashvili, Baggio, and Zenklusen~\cite{ABZ16} on the natural LP relaxation for the firefighter problem on trees. This asserts that although the natural LP relaxation for constant height trees is not integral, one can get integral solutions by violating the constraints {\em additively} by a constant. The aforementioned savings allow us to get a solution without violating the budget constraints.

In summary, given an instance of the \trkco problem, we run an outer round-or-cut framework and use it to check whether an instance is well-separated or not. If not, we straightaway get an approximate solution via the CGK approach and the ABZ observation. Otherwise, we use enumeration (similar to~\cite{AAKZ20}) to obtain $O(n)$ many different {\em well-separated} instances and for each, run an inner round-or-cut framework. If any of these well-separated instances are feasible, we get an approximate solution for the initial \trkco instance. Otherwise, we can assert a separating hyperplane for the outer round-or-cut framework. \medskip

\noindent
{\bf \em Related Work.} 
\nukc was introduced in~\cite{CGK16} as a generalization to the $k$-center problem~\cite{Gon85,HS85,HS86} and the robust $k$-center problem~\cite{CKMN01}. In particular CGK reduce \nukc to the firefighter problem on trees which has constant approximations~\cite{FKMR07,CC10,ABZ16} and recently, a quasi-PTAS~\cite{RS20}. \nukc has also been studied in the \emph{perturbation resilient}~\cite{ABS12,AMM17,CG18} settings. An instance is $\rho$-perturbation resilient if the optimal clustering does not change even when the metric is perturbed up to factor $\rho$.
Bandapadhyay~\cite{B20} gives an exact polynomial time algorithm for $2$-perturbation resilient instances with constant number of radii.

As mentioned above, part of our approach is inspired by ideas from fair colorful $k$-center clustering~\cite{BIPV19,JSS20,AAKZ20} problems studied recently. In this problem, the points are divided into $t$ color classes and we are asked to cover $m_i$, $i \in \{1,\dots,t\}$ many points from each color by opening $k$-centers. The idea of moving to well-separated instances are present in these papers. We should mention, however, that the problems are different, and their results do not imply ours.

The round-or-cut framework is a powerful approximation algorithm technique 	first used in a paper by Carr \etal~\cite{CFLP00} for the minimum knapsack problem, and since then has found use in other areas such as network design~\cite{CCKK15} and clustering~\cite{ASS17,L15,L16,CN18,AAKZ20}. Our multi-layered round-or-cut approach may find uses in other optimization problems as well.

\section{Detailed Description of Our Approach}\label{sec:prelims}

In this section, we provide the necessary technical preliminaries required for proving \Cref{thm:mainthm} and give a more detailed description of the CGK bottleneck and our approach. 
We start with  notations. Let $(X,d)$ be a metric space on a set of points $X$ with distance function $d:X \times X \longrightarrow \mathbb{R}_{\geq 0}$ satisfying the triangle inequality. 
For any $u\in X$ we let $B(u,r)$ denote the set of points in a ball of radius $r$ around $u$, that is, $B(u,r) = \{v \in X : d(u,v) \leq r\}$. For any set $U \subseteq X$ and function $f:U \rightarrow \R$, we use the shorthand notation $f(U) := \sum_{u \in U} f(u)$. For a set $U \subseteq X$ and any $v \in X$ we use $d(v,U)$ to denote $\min_{u \in U} d(v,u)$.

The $2$-radii \nukc problem and the robust version are formally defined as follows.

\begin{definition}[\trkc and \trkco]\label{def:trkc}\label{def:trkco}
The input to \trkc is a metric space $(X,d)$ along with two radii $r_1 > r_2 \geq 0$ with respective budgets $k_1,k_2 \in \N$. 
The objective of \trkc is to find the minimum $\optval \geq 1$ for which there exists subsets $S_1, S_2 \subseteq X$ such that (a) $|S_i| \leq k_i$ for $i \in \{1,2\}$, and (b) $\bigcup_i\bigcup_{u \in S_i} B(u,\optval r_i) = X$. 
The input to \trkco contains an extra parameter $\m \in \N$, and the objective is the same, except that condition (b) is changed to $|\bigcup_i\bigcup_{u \in S_i} B(u,\optval r_i)| \geq \m$.
\end{definition}
\noindent
An instance $\cI$ of \trkco is denoted as \trkcoinst. As is standard, we will focus on the {\em approximate feasibility} version of the problem. An algorithm for this problem takes input an instance $\cI$ of \trkco, and either asserts that $\cI$ is {\em infeasible}, that is, there is no solution with $\rho = 1$, or provides a solution with $\rho \leq \alpha$. 
Using binary search, such an algorithm implies an $\alpha$-approximation for \trkco.

{\em Linear Programming Relaxations.}
The following is the natural LP relaxation for the feasibility version of \trkco. For every point $v\in X$, $\cov_i(v)$ denotes its {\em coverage} by balls of radius $r_i$.
Variable $x_{i,u}$ denotes the extent to which a ball of radius $r_i$ is open at point $u$. If instance $\cI$ is feasible, then the following polynomial sized system of inequalities has a feasible solution.
\begin{align}
   \{(\cov_i(v): v\in X, i \in \{1,2\}) :~~~ \sum_{v \in X} \cov(v)& \geq \m & \label{lp:trkco}\tag{\trkco LP}\\
    \sum_{u \in X} x_{i,u} &\leq k_i & \forall i \in \{1,2\} \notag\\
    \cov_1(v) = \sum_{u \in B(v,r_1)} x_{1,u},~~  \cov_2(v) = \sum_{u \in B(v,r_2)} &x_{2,u}  &\forall v \in X \notag\\
    \cov(v) = \cov_1(v) + \cov_2(v) &\leq 1   & \forall v \in X \notag\\
    x_{i,u} &\geq 0                     & \forall i \in \{1,2\}, \forall u \in X\notag\}
\end{align}
For our algorithm, we will work with the following integer hull of all possible fractional coverages.
Fix a \trkco instance $\cI = \trkcoinst$ and let $\cF$ be the set of all tuples of subsets $(S_1,S_2)$ with $|S_i|\leq k_i$.
For $v \in X$ and $i \in \{1,2\}$, we say $\cF$ covers $v$ with radius $r_i$ if $d(v,S_i) \leq r_i$. 
Let $\cF_i(v) \subseteq \cF$ be the subset of solutions that cover $v$ with radius $r_i$. 
Moreover, we would like $\cF_1(v)$ and $\cF_2(v)$ to be disjoint, so if $S \in \cF_1(v)$, we do not include it in $\cF_2(v)$.
The following is the integer hull of the coverages. If $\cI$ is feasible, there must exist a solution in $\CovP$.

\begin{alignat}{4}
 \{(\cov_i(v): v\in X, i \in \{1,2\}) : 
 && \sum_{v \in X}  \left(\cov_1(v) + \cov_2(v)\right) & \geq & ~~\m \label{eq:P1} \tag{$\CovP$}  \\
\forall v\in X,i \in \{1,2\} && ~~\cov_i(v) - \sum\limits_{S \in \cF_i(v)} z_S &=& ~~0 \tag{$\CovP$.1} \label{eq:P2} \\
&& 	\sum\limits_{S \in \cF} z_S & = & ~~ 1  \tag{$\CovP$.2} \label{eq:P3} \\
\forall S\in \cF  && z_S &\geq & 0\} \tag{$\CovP$.3} \label{eq:P4}
\end{alignat}
\begin{fact}\label{clm:configiInLP}
	$\CovP$ lies inside \ref{lp:trkco}.
\end{fact}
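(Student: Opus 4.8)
The plan is to take an arbitrary point $(\cov_i(v) : v \in X, i \in \{1,2\})$ in $\CovP$, witnessed by a distribution $(z_S)_{S \in \cF}$ satisfying \ref{eq:P2}--\ref{eq:P4}, and exhibit values $x_{i,u}$ that together with these coverages satisfy every constraint of \ref{lp:trkco}. The natural choice is the obvious ``marginal'' one: for each $i \in \{1,2\}$ and each $u \in X$, set $x_{i,u} := \sum_{S = (S_1,S_2) \in \cF\, :\, u \in S_i} z_S$, i.e.\ the total $z$-weight of solutions that place an $r_i$-center at $u$. I would then verify the constraints one by one.

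First I would check the budget constraints: $\sum_{u \in X} x_{i,u} = \sum_u \sum_{S : u \in S_i} z_S = \sum_{S \in \cF} |S_i|\, z_S \le k_i \sum_S z_S = k_i$, using $|S_i| \le k_i$ for every $S \in \cF$ and \ref{eq:P3}. Next, the coverage-definition constraints: fix $v$ and $i$; I claim $\sum_{u \in B(v,r_i)} x_{i,u} \ge \cov_i(v)$. Indeed $\sum_{u \in B(v,r_i)} x_{i,u} = \sum_{S} |S_i \cap B(v,r_i)|\, z_S \ge \sum_{S : d(v,S_i) \le r_i} z_S = \sum_{S \in \cF_i(v)} z_S = \cov_i(v)$ by \ref{eq:P2}. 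Since \ref{lp:trkco} only requires $\cov_i(v) = \sum_{u \in B(v,r_i)} x_{i,u}$, I would absorb any slack by the standard trick of noting the LP is monotone — formally, replace $\cov_i(v)$ by the larger quantity $\sum_{u \in B(v,r_i)} x_{i,u}$, which only helps the $\ge \m$ constraint; the only constraint pushing the other way is $\cov_1(v) + \cov_2(v) \le 1$, so that needs separate care (see below). Then $\sum_v \cov(v) \ge \m$ transfers directly from \ref{eq:P1}, and $x_{i,u} \ge 0$ is immediate from $z_S \ge 0$.

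The one point needing genuine (if small) care — and the step I'd flag as the main obstacle — is the packing constraint $\cov_1(v) + \cov_2(v) \le 1$ for each $v$. Here the disjointness convention in the definition of $\CovP$ is doing the work: $\cF_1(v)$ and $\cF_2(v)$ are disjoint subsets of $\cF$ (a solution covering $v$ at radius $r_1$ is excluded from $\cF_2(v)$), so $\cov_1(v) + \cov_2(v) = \sum_{S \in \cF_1(v)} z_S + \sum_{S \in \cF_2(v)} z_S = \sum_{S \in \cF_1(v) \cup \cF_2(v)} z_S \le \sum_{S \in \cF} z_S = 1$ by \ref{eq:P3} and \ref{eq:P4}. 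This also forces me to be slightly more careful with the monotonicity argument above: rather than inflating $\cov_i(v)$ arbitrarily, I should simply keep the original $\cov_i(v)$ values (which already satisfy the packing constraint) and observe that \ref{lp:trkco} is satisfiable with \emph{these} coverages — the coverage-definition equalities hold if we instead define $x_{i,u}$ by scaling down, or one notes \ref{lp:trkco} as written is really asking for existence of $x$ realizing coverages \emph{at least} as large, which is what the inclusion $\CovP \subseteq \ref{lp:trkco}$ means. I would state the lemma's intent precisely (projection onto the $\cov$-variables) to avoid this ambiguity, then the verification is the routine chain of (in)equalities above.
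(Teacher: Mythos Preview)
The paper states this as a bare Fact with no proof; it is meant as the routine observation that the integer hull of feasible coverages sits inside the natural LP relaxation. Your verification therefore already goes beyond what the paper provides, and it is correct in substance.

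You also correctly isolate the one real subtlety: \ref{lp:trkco} as written uses coverage \emph{equalities} $\cov_i(v)=\sum_{u\in B(v,r_i)} x_{i,u}$, whereas the marginal choice $x_{i,u}=\sum_{S:\,u\in S_i} z_S$ only yields $\sum_{u\in B(v,r_i)} x_{i,u}\ge \cov_i(v)$, and inflating $\cov_i$ to restore equality could break the packing constraint $\cov_1(v)+\cov_2(v)\le 1$. This is an imprecision in the paper's formulation rather than a gap in your reasoning: the Fact holds in the intended sense once the coverage constraints are read as $\cov_i(v)\le \sum_{u\in B(v,r_i)} x_{i,u}$, which is how such covering relaxations are standardly written and how the LP is actually used downstream (only the lower bound on total coverage and the upper bounds on the budgets matter). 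Your resolution --- keep the original $\cov_i(v)$ from $\CovP$, use the disjointness of $\cF_1(v)$ and $\cF_2(v)$ together with \ref{eq:P3} for the packing bound, and exhibit $x$ achieving at least these coverages within budget --- is exactly the right one.
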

\noindent
{\bf \em Firefighting on Trees.} As described in~\Cref{sec:intro}, the CGK approach~\cite{CGK16} is via the firefighter problem on trees. 
Since we only focus on \trkco, the relevant problem is the {\em weighted} $2$-level fire fighter problem. The input includes a set of height-2 trees (stars) with root nodes $L_1$ and leaf nodes $L_2$. Each leaf $v \in L_2$ has a parent $p(v) \in L_1$ and an integer weight $\w(v) \in \N$. We use $\leaf(u)$ to denote the leaves connected to a 
$u \in L_1$ (that is, $\{v \in L_2: p(v) = u\}$). Observe that $\{\leaf(u) : u \in L_1\}$ partitions $L_2$. So we could represent the edges of the trees by this $\leaf$ partition. Hence the structure is identified as $(L_1,L_2,\leaf,\w)$.
\begin{definition}[2-Level Fire Fighter (\tlff) Problem]\label{def:tlff}
Given height-2 trees $\tree$ along with budgets $k_1,k_2 \in \N$, a feasible solution is a pair $T=(T_1,T_2)$, $T_i \subseteq L_i$, such that $|T_i| \leq k_i$ for $i \in \{1,2\}$. Let $\cC(T) := \{v \in L_2: v \in T_2 \lor p(v) \in T_1\}$ be the set of leaves covered by $T$. The objective is to maximize $\w(\cC(T))$.  Hence a \tlff instance is represented by $((L_1,L_2,\leaf,\w),k_1,k_2)$.
\end{definition}
\noindent
The standard LP relaxation for this problem is quite similar to the \ref{lp:trkco}. For each vertex $u \in L_1\cup L_2$ there is a variable $0 \leq y_u \leq 1$ that shows the extent to which $u$ is included in the solution. For a leaf $v$, $Y(v)$ is the fractional amount by which $v$ is covered through both itself and its parent. 
\begin{align}
\max\sum_{v \in L_2}\w(v)Y(v):~~
&\sum_{u \in L_i} y_u \leq k_i,~~ \forall i \in \{1,2\}; ~~~\label{lp:tlff}\tag{\tlff LP}\\
&Y(v) := y_{p(v)} + y_v \leq 1, ~\forall v \in L_2;~~~
y_u \geq 0,~~ \forall u \in L_1\cup L_2 \notag
\end{align}
\remark{The following figure shows an example where the above LP relaxation has an integrality gap. However,
	\tlff can be solved via dynamic programming in $O(n^3)$ time and has similar sized integral LP relaxations.}
\begin{figure}[!ht]
	\centering
	\includegraphics[scale = 0.5]{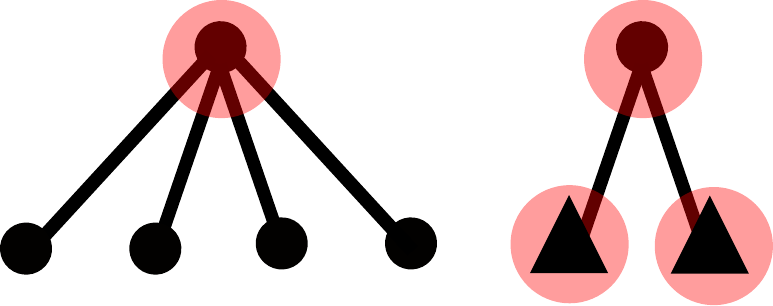}
	\caption{A \tlff instance with budgets $k_1 = k_2 = 1$. Multiplicity $\w$ is 1 for the circle leaves and 3 for the triangles. The highlighted nodes have $y = 1/2$ and the rest of the nodes have $y=0$. 
		The objective value for this $y$ is $4 \times 1/2 + 6 = 8$ but no integral solution can get an objective value of more than 7.}
	\label{fig:gaplvl2}
\end{figure}
\subsection{\cgk's Approach and its Shortcomings}
Given fractional coverages $(\cov_1(v), \cov_2(v)~:~v\in X)$, the CGK algorithm~\cite{CGK16} runs the classic clustering subroutine by Hochbaum and Shmoys~\cite{HS85}
in a greedy fashion. In English, the Hochbaum-Shmoys (HS) routine partitions a metric space such that the representatives of each part are well-separated with respect to an input parameter. The CGK algorithm obtains a \tlff instance by applying the HS routine twice. Once on the whole metric space in decreasing order of $\cov(v) = \cov_1(v) + \cov_2(v)$, and the set of representatives forms the leaf layer $L_2$ with weights being the size of the parts. The next time on $L_2$ itself in decreasing order of $\cov_1$ and the representatives form the parent layer $L_1$. These subroutines and the subsequent facts form a part of our algorithm and analysis.
\begin{algorithm}[ht]
	\caption{\HS}
	\label{alg:HS}
	\begin{algorithmic}[1]
		\Require Metric $(U,d)$, parameter $r \geq 0$, and assignment $\{\cov(v) \in \R_{\geq 0} :v\in  U\}$
		\State $R \leftarrow \emptyset$ \Comment{The set of representatives}
		\While{ $U \neq \emptyset$} 
	        \State $u \leftarrow \arg\max_{v\in U} \cov(v)$ \Comment{The first client in $U$ in non-increasing $\cov$ order} \label{ln:greedy}
	        \State $R \leftarrow R \cup u$ \label{ln:rep}
	        \State $\chld(u) \leftarrow \{v \in U: d(u,v) \leq r\}$\Comment{Points in $U$ at distance $r$ from $u$ (including $u$ itself)} \label{ln:chld}
	        \State $U \leftarrow U \backslash \chld(u)$\label{ln:remove-from-U}
		\EndWhile
		\Ensure $R$, $\{\chld(u) : u \in R\}$
	\end{algorithmic}
\end{algorithm}
\begin{algorithm}[ht]
		\caption{\cgk}
		\label{constAlgo}\label{alg:reduce}
		\begin{algorithmic}[1]
			\Require \trkco instance \trkcoinst, dilation factors $\alpha_1, \alpha_2 > 0$, and assignments $\cov_1(v), \cov_2(v) \in \R_{\geq 0}$ for all $v\in X$ 
			\State $(L_2, \{\chld_2(v), v \in L_2\}) \leftarrow \HS((X,d),\alpha_2r_2,\cov = \cov_1 + \cov_2)$ \label{ln:L2}
		    \State $(L_1, \{\chld_1(v), v \in L_1\}) \leftarrow \HS((L_2,d),\alpha_1r_1,\cov_1)$\label{ln:L1}
		    \State $\w(v) \leftarrow |\chld_2(v)|$ for all $v \in L_2$ \label{ln:w}
		    \State $\leaf(u) \leftarrow \chld_1(u)$ for all $u \in L_1$\label{ln:leaf}
            \Ensure \tlff instance \tlffinst
		\end{algorithmic}
\end{algorithm}
\begin{definition}[Valuable \tlff instances]
	We call an instance $\cT$ returned by the \cgk algorithm valuable if it has an {\em integral} solution of total weight at least $\m$.
	Using dynamic programming, there is a polynomial time algorithm to check whether $\cT$ is valuable.
\end{definition}
\begin{fact}\label{fact:HS}
	The following are true regarding the output of \HS:
		(a) $\forall u \in R, \forall v \in \chld(u): d(u,v) \leq r$, (b)
	$\forall u,v \in R: d(u,v) > r$, (c)
	The set $\{\chld(u) : u \in R\}$ partitions $U$, and (d)
$\forall{u \in R},\forall{v \in \chld(u)}: \cov(u) \geq \cov(v)$.	
\end{fact}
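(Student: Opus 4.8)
The plan is to track the contents of the working set $U$ across iterations of the while loop in \Cref{alg:HS} and read each of (a)--(d) directly off the greedy choice made in \Cref{ln:greedy,ln:chld}. Index the iterations $t = 1, 2, \dots$, write $U^{(t)}$ for the contents of $U$ at the start of iteration $t$ (so $U^{(1)}$ is the input set), and let $u_t$ be the representative selected in iteration $t$. First note that the loop terminates: since $d(u_t,u_t) = 0 \le r$, we always have $u_t \in \chld(u_t)$, so \Cref{ln:remove-from-U} gives $|U^{(t+1)}| \le |U^{(t)}| - 1$, and after at most $|U^{(1)}|$ iterations $U$ is empty. Let $T$ be the number of iterations, so $R = \{u_1,\dots,u_T\}$.

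Parts (a) and (d) are then immediate. For (a): in \Cref{ln:chld} the set $\chld(u_t)$ is declared to be exactly $\{v \in U^{(t)} : d(u_t,v) \le r\}$, so every $v \in \chld(u_t)$ satisfies $d(u_t,v) \le r$. For (d): in \Cref{ln:greedy} the vertex $u_t$ maximizes $\cov$ over $U^{(t)}$, and $\chld(u_t) \subseteq U^{(t)}$, so $\cov(u_t) \ge \cov(v)$ for all $v \in \chld(u_t)$.

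For (c), \Cref{ln:remove-from-U} gives $U^{(t+1)} = U^{(t)} \setminus \chld(u_t)$ with $\chld(u_t) \subseteq U^{(t)}$; iterating from $t = 1$ to $T$ and using $U^{(T+1)} = \emptyset$ yields $U^{(1)} = \bigcup_{t=1}^{T} \chld(u_t)$. Disjointness follows because for $s < t$ we have $\chld(u_t) \subseteq U^{(t)} \subseteq U^{(s+1)}$, which is disjoint from $\chld(u_s)$ by construction; and each $\chld(u_t)$ is nonempty since it contains $u_t$. Hence $\{\chld(u) : u \in R\}$ partitions the input $U$.

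Finally, for (b), take $u_s, u_t \in R$ with $s < t$. When $u_s$ was selected, $\chld(u_s) = \{v \in U^{(s)} : d(u_s,v) \le r\}$ was removed from $U$, so $U^{(t)} \subseteq U^{(s+1)}$ contains no vertex within distance $r$ of $u_s$; since $u_t \in U^{(t)}$, this gives $d(u_s,u_t) > r$. I do not expect a genuine obstacle in this proof; the only points that need care are that $\chld(u)$ is defined relative to the current (shrinking) set $U$ rather than to the original input, and that $u \in \chld(u)$ always holds --- the latter being exactly what simultaneously forces termination and the nonemptiness of the parts in (c).
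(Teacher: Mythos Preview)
Your proof is correct. The paper states this as a \emph{Fact} without proof, treating the properties as immediate consequences of the greedy Hochbaum--Shmoys procedure; your argument simply spells out that verification by tracking the working set $U^{(t)}$ across iterations, which is exactly the intended reading of \Cref{alg:HS}.
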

\begin{lemma}[rewording of Lemma 3.4. in \cite{CGK16}]\label{lma:approx}
Let $\cI$ be a \trkco instance. If for any fractional coverages $(\cov_1(v),\cov_2(v))$ the instance \tlff created by \Cref{alg:reduce} is valuable, then one obtains an $(\alpha_1 + \alpha_2)$-approximation for $\cI$.
\end{lemma}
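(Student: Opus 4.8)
The plan is to argue that the hypothesis is really a statement about \emph{rounding}: a valuable \tlff instance produced by \Cref{alg:reduce} translates back, essentially verbatim, into a feasible integral \trkco solution of dilation at most $\alpha_1+\alpha_2$, regardless of whether the input coverages were LP-feasible. The ``approximation'' in the statement then comes for free in the approximate-feasibility framework: run the algorithm on the coverages of a feasible \ref{lp:trkco} solution (declaring $\cI$ infeasible when \ref{lp:trkco} has no solution), and invoke the rounding.

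So suppose we are handed fractional coverages $(\cov_1(v),\cov_2(v))_{v\in X}$ for which \Cref{alg:reduce} outputs a valuable instance $\cT=\tlffinst$, and let $T=(T_1,T_2)$, $T_i\subseteq L_i$, $|T_i|\le k_i$, be an integral \tlff solution with $\w(\cC(T))\ge\m$. I would set $S_1:=T_1$ and $S_2:=T_2$, viewed as subsets of $X$ via $L_1\subseteq L_2\subseteq X$. The budget constraints $|S_i|\le k_i$ are then immediate, so all that remains is to show $\bigl|\bigcup_{i}\bigcup_{u\in S_i}B(u,(\alpha_1+\alpha_2)r_i)\bigr|\ge\m$, i.e.\ that $(S_1,S_2)$ witnesses $\optval\le\alpha_1+\alpha_2$.

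The crux is that every leaf $v\in\cC(T)$ ``pays for'' its entire cluster $\chld_2(v)$. Fix $w\in\chld_2(v)$. If $v\in T_2$, then \Cref{fact:HS}(a) applied to the call on Line~\ref{ln:L2} gives $d(v,w)\le\alpha_2 r_2\le(\alpha_1+\alpha_2)r_2$, so $w\in B(v,(\alpha_1+\alpha_2)r_2)$ with $v\in S_2$. Otherwise $u:=p(v)\in T_1$; since $\leaf(u)=\chld_1(u)$, \Cref{fact:HS}(a) applied to the call on Line~\ref{ln:L1} gives $d(u,v)\le\alpha_1 r_1$, and the triangle inequality together with $d(v,w)\le\alpha_2 r_2$ yields $d(u,w)\le\alpha_1 r_1+\alpha_2 r_2\le(\alpha_1+\alpha_2)r_1$, the last step using $r_2\le r_1$; hence $w\in B(u,(\alpha_1+\alpha_2)r_1)$ with $u\in S_1$. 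Finally, by \Cref{fact:HS}(c) the clusters $\{\chld_2(v):v\in L_2\}$ partition $X$, so the number of covered points is at least $\sum_{v\in\cC(T)}|\chld_2(v)|=\sum_{v\in\cC(T)}\w(v)=\w(\cC(T))\ge\m$, using $\w(v)=|\chld_2(v)|$ from Line~\ref{ln:w}. This is exactly what feasibility with $\optval\le\alpha_1+\alpha_2$ requires.

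The only step requiring care --- and the ``main obstacle'', though it is a mild one --- is the dilation bookkeeping: the reduction runs the Hochbaum--Shmoys clustering twice, so error accrues at both levels, but the two errors attach to radii $r_1$ and $r_2$ separately, and because $r_2\le r_1$ a single common dilation $\alpha_1+\alpha_2$ simultaneously absorbs $\alpha_2 r_2$ (level-$2$ only) and $\alpha_1 r_1+\alpha_2 r_2$ (both levels). Everything else is a matter of matching the tree's parent--child relation to metric proximity through \Cref{fact:HS}; there is no genuine difficulty here, consistent with this being a rewording of a known lemma of \cite{CGK16}.
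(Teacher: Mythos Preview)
Your proof is correct. The paper does not actually include its own proof of this lemma---it is merely stated as a rewording of Lemma~3.4 in~\cite{CGK16} and then used as a black box---so there is nothing to compare against directly. That said, your argument is exactly the standard one underlying the CGK reduction: open centers at $S_i:=T_i$, and use \Cref{fact:HS}(a) at both levels plus the triangle inequality to conclude that every point in $\chld_2(v)$ for $v\in\cC(T)$ lies within dilation $\alpha_1+\alpha_2$ of some opened center, with the partition property (\Cref{fact:HS}(c)) and $\w(v)=|\chld_2(v)|$ converting $\w(\cC(T))\ge\m$ into the required coverage count. The dilation bookkeeping you flag as the ``main obstacle'' is handled correctly.
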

\noindent
\Cref{lma:approx} suggests that if we can find fractional coverages so that the corresponding \tlff instance $\cT$ is valuable, then we are done. 
Unfortunately, the example illustrated in \Cref{fig:gaplvl2legitmetric} 
shows that for any $(\alpha_1, \alpha_2)$ there exists \trkco instances and fractional coverages $(\cov_1(v),\cov_2(v)) \in \CovP$
in the integer hull, for which the CGK algorithm returns \tlff instances that are not valuable.

	\begin{figure}[ht]
		\centering
		\includegraphics[width=0.7\textwidth]{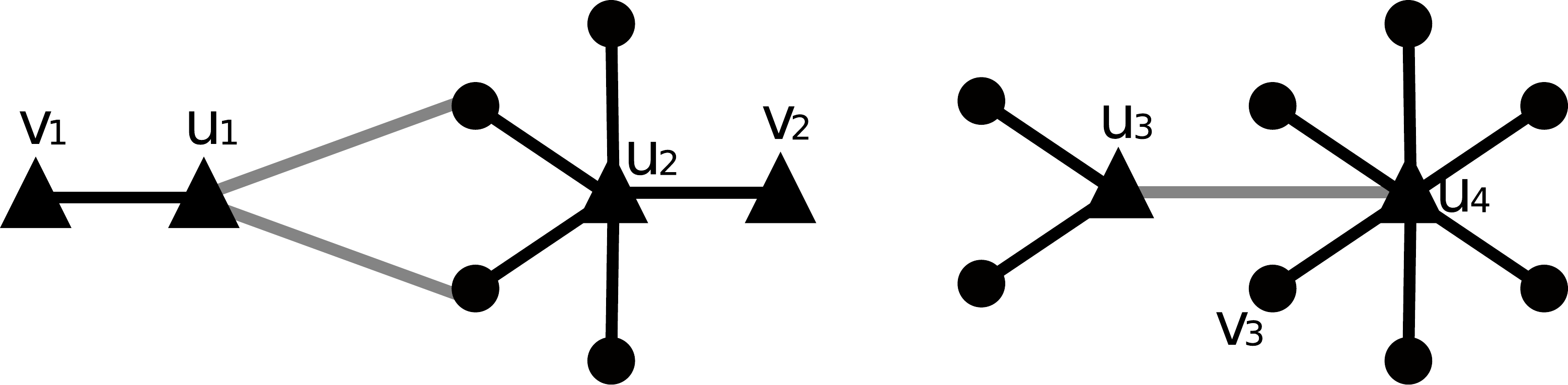}
		\includegraphics[width=0.7\textwidth]{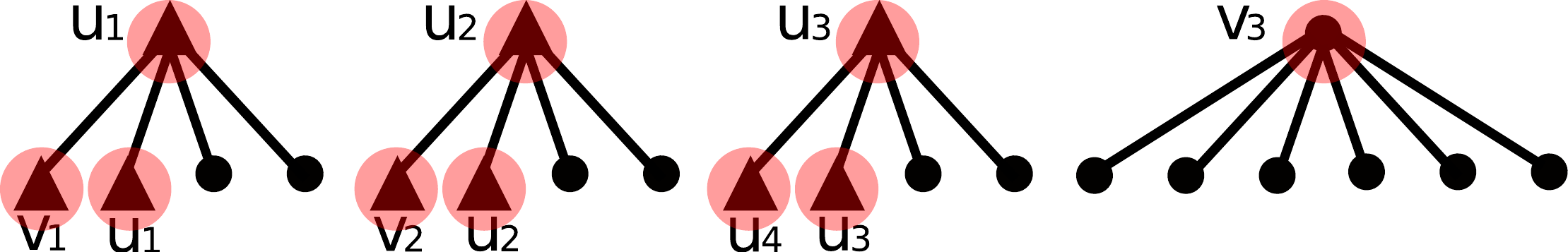}
		\caption{At the top, there is a feasible \trkco instance with $k_1 = 2$, $k_2 = 3$, and $\m = 24$. There are 6 triangles representing 3 collocated points each, along with 12 circles, each representing one point. The black edges are distance $r_1 > \alpha_2r_2$ and the grey edges are distance $\alpha_1r_1$. There are two integral solutions $S$ and $S'$ each covering exactly 24 points. $S_1 = \{u_1,u_4\}$, $S_2 = \{u_2,v_2,u_3\}$, $S'_1 = \{u_2,u_3\}$, and $S'_2 = \{u_1,v_1,u_4\}$. Having $z_S = z_{S'} = 1/2$ in $\CovP$, gives $\cov_1$ of $1/2$ for all the points and $\cov_2$ of $1/2$ for the triangles. The output of \Cref{alg:reduce} is the \tlff instance at the bottom. According to \Cref{clm:givesFracSln} the highlighted nodes have $y = 1/2$ and the rest of the nodes have $y=0$ with objective value $12 \times 1/2 + 18 = 24$ but no integral solution can get an objective value of more than 23.}
		\label{fig:gaplvl2legitmetric}
	\end{figure}
\subsection{Our Idea}\label{sec:ouridea}
Although the \tlff instance obtained by \Cref{alg:reduce} from fractional coverages $(\cov_1(v), \cov_2(v):v\in X)$ may not be valuable,~\cite{CGK16} proved that if these coverages come from \eqref{lp:trkco}, then there is always a {\em fractional} solution to \eqref{lp:tlff} for this instance which has value at least$\m$.
\begin{proposition}[rewording of Lemma 3.1. in \cite{CGK16}]\label{clm:givesFracSln}
	Let $(\cov_1(v), \cov_2(v): v\in X)$ be any feasible solution to \ref{lp:trkco}.  As long as $\alpha_1,\alpha_2 \geq 2$, the following is a fractional solution of \ref{lp:tlff} with value at least $\m$ for the \tlff instance output by \Cref{alg:reduce}.
	\begin{equation*}
	y_v = \begin{cases} \cov_1(v) & v \in L_1\\
	\min\{\cov_2(v), 1-\cov_1(p(v))\} & v \in L_2.
	\end{cases}
	\end{equation*}
\end{proposition}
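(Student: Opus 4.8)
The statement to prove is \Cref{clm:givesFracSln}: given $(\cov_1(v),\cov_2(v):v\in X)$ feasible for \ref{lp:trkco} and $\alpha_1,\alpha_2\ge 2$, the displayed $y$ is feasible for \ref{lp:tlff} on the tree instance output by \Cref{alg:reduce}, with objective value at least $\m$. I would organize the proof into three checks: the two budget constraints $y(L_i)\le k_i$, the per-leaf packing constraints $Y(v)\le 1$, and the lower bound on the objective. The main tool throughout is \Cref{fact:HS}, especially the separation guarantees (b) and the coverage-monotonicity (d), applied to the two \HS calls in \Cref{alg:reduce}.

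\textbf{Budget for $L_1$.} First I would show $y(L_1)=\cov_1(L_1)\le k_1$. Since $L_1$ is produced by $\HS((L_2,d),\alpha_1 r_1,\cov_1)$ with $\alpha_1\ge 2$, \Cref{fact:HS}(b) gives $d(v,v')>2r_1$ for distinct $v,v'\in L_1$, hence the balls $B(v,r_1)$ are pairwise disjoint. Then $y(L_1)=\sum_{v\in L_1}\cov_1(v)=\sum_{v\in L_1}\sum_{u\in B(v,r_1)}x_{1,u}\le \sum_{u\in X}x_{1,u}\le k_1$ using the $\cov_1$ definition and the $x_1$-budget constraint of \ref{lp:trkco}. (One subtlety: $L_1\subseteq L_2\subseteq X$, so the ball centers and the covering points all live in $X$ and the metric restricts correctly; I'd note this in passing.)

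\textbf{Budget for $L_2$.} Next, since $y_v=\min\{\cov_2(v),1-\cov_1(p(v))\}\le \cov_2(v)$ for $v\in L_2$, and $L_2$ comes from $\HS((X,d),\alpha_2 r_2,\cov)$ with $\alpha_2\ge 2$, the same disjointness argument on the balls $B(v,r_2)$, $v\in L_2$, combined with the $\cov_2$ definition and the $x_2$-budget, yields $y(L_2)\le \cov_2(L_2)\le k_2$. The packing constraint is essentially built into the definition: $Y(v)=y_{p(v)}+y_v=\cov_1(p(v))+\min\{\cov_2(v),1-\cov_1(p(v))\}\le \cov_1(p(v))+(1-\cov_1(p(v)))=1$, so nonnegativity and $Y(v)\le 1$ are immediate.

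\textbf{Objective value — the crux.} The real work is showing $\sum_{v\in L_2}\w(v)Y(v)\ge \m$. The plan is to prove the two inequalities $Y(v)\ge \cov(v)$ and $\w(v)Y(v)\ge \cov(\chld_2(v))$ for every $v\in L_2$. For the first: if $\cov_2(v)\le 1-\cov_1(p(v))$ then $Y(v)=\cov_1(p(v))+\cov_2(v)\ge \cov_1(v)+\cov_2(v)=\cov(v)$, where $\cov_1(p(v))\ge\cov_1(v)$ holds because $v\in\chld_1(p(v))$ and \HS was run in decreasing $\cov_1$ order (\Cref{fact:HS}(d)); otherwise $Y(v)=1\ge\cov(v)$ by the \ref{lp:trkco} constraint $\cov(v)\le 1$. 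For the second inequality, recall $\w(v)=|\chld_2(v)|$ and that the first \HS call used order $\cov=\cov_1+\cov_2$, so by \Cref{fact:HS}(d) every $u\in\chld_2(v)$ has $\cov(u)\le\cov(v)\le Y(v)$; summing over the $\w(v)$ children gives $\cov(\chld_2(v))\le \w(v)\,Y(v)$. Finally, since $\{\chld_2(v):v\in L_2\}$ partitions $X$ by \Cref{fact:HS}(c), summing yields $\sum_{v\in L_2}\w(v)Y(v)\ge \sum_{v\in L_2}\cov(\chld_2(v))=\sum_{u\in X}\cov(u)\ge \m$ by the first constraint of \ref{lp:trkco}.

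\textbf{Expected obstacle.} The routine parts are the budget and packing constraints; the only place demanding care is the objective bound, specifically making sure the two \HS calls' orderings are invoked correctly — the parent-dominates-child inequality $\cov_1(p(v))\ge\cov_1(v)$ uses the \emph{second} call's ordering by $\cov_1$, while $\cov(u)\le\cov(v)$ for $u\in\chld_2(v)$ uses the \emph{first} call's ordering by $\cov$. Keeping these two monotonicities separate, and handling the $\min$ in the definition of $y_v$ by the case split above, is the crux; everything else is bookkeeping with \Cref{fact:HS} and the \ref{lp:trkco} constraints.
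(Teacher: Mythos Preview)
Your proposal is correct and follows essentially the same approach as the paper's proof: disjointness of the $r_i$-balls around $L_i$ (via \Cref{fact:HS}(b) and $\alpha_i\ge 2$) to bound $y(L_i)\le k_i$, and the two monotonicity facts $\cov_1(p(v))\ge\cov_1(v)$ and $\cov(v)\ge\cov(u)$ for $u\in\chld_2(v)$ (via \Cref{fact:HS}(d)) to get $Y(v)\ge\cov(v)$ and then $\w(v)Y(v)\ge\cov(\chld_2(v))$, summed over the partition. Your version is in fact slightly more thorough, as you explicitly verify the packing constraint $Y(v)\le 1$ and spell out the case split on the $\min$, both of which the paper leaves implicit.
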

\noindent
Therefore, the problematic instances are precisely \tlff instances that are integrality gap examples for \eqref{lp:tlff}. Our first observation stems from what Adjiashvili, Baggio, and Zenklusen~\cite{ABZ16} call ``the narrow integrality gap of the firefighter LP''.

\begin{lemma}[From Lemma 6 of \cite{ABZ16}]\label{lma:loose}
	Any basic feasible solution $\{y_v : i \in \{1,2\}, v \in L_i\}$ of the \ref{lp:tlff} polytope has at most 2 \emph{loose} variables. A variable $y_v$ is loose if $0 <y_v<1$ and $y_{p(v)} = 0$ in case $v \in L_2$.
\end{lemma}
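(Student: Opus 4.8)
This is a standard sparsity argument for basic feasible solutions (bfs), specialized to the star structure. At a bfs $y$ of the \ref{lp:tlff} polytope the number of linearly independent tight constraints equals the number of variables $n := |L_1|+|L_2|$. These tight constraints are of three kinds: tight box constraints $y_u\in\{0,1\}$ (the relaxation has $0\le y_u\le 1$; for $v\in L_2$ the upper bound is anyway implied by $Y(v)\le 1$), at most the two budget constraints $\sum_{u\in L_i}y_u=k_i$, and some subset of the covering constraints $y_{p(v)}+y_v=1$. Write $F$ for the set of \emph{fractional} variables ($0<y_u<1$) and $f:=|F|$. Exactly $n-f$ box constraints are tight, one per integral variable, so after fixing (projecting out) the integral coordinates, the budget and covering constraints must have rank at least $f$ on the $f$-dimensional space of fractional coordinates.

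The key structural step is to see which tight covering constraints survive this projection. Consider a tight covering constraint $y_{p(v)}+y_v=1$. If $v$ is fractional but $p(v)$ is integral, then $y_{p(v)}=1$ is impossible (as $y_v>0$ and $Y(v)\le 1$), so $y_{p(v)}=0$; but then the constraint forces $y_v=1$, contradicting $v\in F$. Hence this mixed case never occurs; symmetrically, a tight covering constraint cannot have $v$ integral with $p(v)$ fractional either. Consequently, every fractional leaf with an integral parent automatically has $y_{p(v)}=0$ — hence is \emph{loose} — and satisfies $Y(v)=y_v<1$, so it is not full. The tight covering constraints contributing to the rank are therefore exactly those $y_{p(v)}+y_v=1$ with \emph{both} endpoints fractional; these define a graph $H$ on vertex set $F$ joining each such leaf to its (fractional) parent. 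Since every leaf has a unique parent, $H$ has maximum degree $1$ on the leaf side, so $H$ is a bipartite forest whose connected components are isolated vertices or stars centered at a fractional root.

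Now I would finish with a rank count and a charging argument. On the $f$ fractional coordinates the box constraints contribute nothing new, the (at most two) budget constraints contribute rank at most $2$, and the edge-equations of a forest with $c$ components on $f$ vertices have rank exactly $f-c$. Since the total must be at least $f$, we get $c\le 2$: the forest $H$ has at most two connected components. Finally, each component contains at most one loose variable: an isolated fractional root is loose and is its own component; a star component contributes only its center, because every leaf $v$ of a star has $y_{p(v)}>0$ and so is not loose; and an isolated fractional leaf is loose only when its parent has value $0$, in which case it again accounts for a single component. Hence the number of loose variables is at most $c\le 2$.

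The only step that needs genuine care is the second one — the dichotomy that a non-redundant tight covering constraint cannot mix a fractional and an integral endpoint, which is exactly what forces every loose leaf to be an isolated vertex of $H$ hanging off a zero-valued parent. Once that is in place, the ``$H$ has at most two components'' rank count and the per-component charging of loose variables are routine. A minor bookkeeping point is the treatment of the upper bounds $y_u\le 1$: they may be taken as part of the relaxation, or derived, from $Y(v)\le 1$ for $v\in L_2$ and from the fact that every root of an instance produced by \Cref{alg:reduce} has at least one child.
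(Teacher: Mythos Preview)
The paper does not supply its own proof of this lemma; it is quoted from Lemma~6 of~\cite{ABZ16} and used as a black box, so there is nothing in the paper to compare against. Your argument is a correct self-contained proof for the height-$2$ case: the dichotomy that a tight covering equality $y_{p(v)}+y_v=1$ cannot mix a fractional and an integral endpoint is sound, the resulting graph $H$ on the fractional support is indeed a disjoint union of stars and isolated vertices, the rank count $2+(f-c)\ge f$ gives $c\le 2$, and the per-component charging of loose variables is correct (in a star the fractional center is the unique loose vertex, since each leaf has $y_{p(v)}>0$; an isolated fractional leaf is loose only when its parent has value $0$). The closing remark about the upper bounds $y_u\le 1$ is an appropriate caveat, and either reading works for the instances arising from \Cref{alg:reduce}.
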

\noindent
In particular, if $y(L_1) \leq k_1 - 2$, then the above lemma along with \Cref{clm:givesFracSln} implies there exists an {\em integral} solution with value $\geq m$. That is, the \tlff instance is valuable. Conversely, the fact that the instance is {\em not} valuable asserts that
$y(L_1) > k_1 - 2$ which in turn implies $\cov_1(L_1) > k_1 - 2$. In English, the fractional coverage puts a lot of weight on the points in $L_1$.

This is where we exploit the ideas in~\cite{BIPV19,JSS20,AAKZ20}. By choosing $\alpha_1 > 2$ to be large enough  in \Cref{clm:givesFracSln}, we can ensure that points in $L_1$ are ``well-separated''. More precisely,  we can ensure for any two $u,v\in L_1$ we have $d(u,v) > \alpha_1r_1$ (from \Cref{fact:HS}). The well-separated condition implies that the same center cannot be fractionally covering two different points in $L_1$. Therefore, $\cov_1(L_1) > k_1 - 2$ if $(\cov_1,\cov_2) \in \CovP$ is in the integer hull, then there must exist an integer solution which opens {\em at most} $1$ center that does \emph{not} cover points in $L_1$. For the time being assume in fact no such center exists and $\cov_1(L_1) = k_1$. Indeed, the integrality gap example in \Cref{fig:gaplvl2legitmetric} satisfies this equality. 

Our last piece of the puzzle is that if the $\cov_1$'s are concentrated on separated points, then indeed we can apply the round-or-cut framework to obtain an approximation algorithm. To this end, we make the following definition, and assert the following theorem.
\begin{definition}[\septrkco]
The input is the same as \trkco, along with $Y\subseteq X$ where $d(u,v) > 4r_1$ for all pairs $u,v \in Y$, and the algorithm is allowed to open the radius $r_1$-centers {\em only} on points in $Y$.
\end{definition}
\begin{theorem}\label{thm:approxsep}
	Given a \septrkco instance there is a polynomial time algorithm using the ellipsoid method that either gives a $4$-approximate solution, or proves that the instance is infeasible.
\end{theorem}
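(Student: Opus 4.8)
The overall strategy is a \emph{round-or-cut} procedure, run by the ellipsoid method over the integer hull $\CovP$, regarded as a polytope in the polynomially many coverage variables $(\cov_1(v),\cov_2(v):v\in X)$ (the configuration variables $z_S$ enter only the analysis). The ellipsoid method ``searches'' for a point of $\CovP$ using a separation oracle with the following dichotomy: on any query $\cov$ it either (a) outputs a hyperplane that is valid for $\CovP$ and violated by $\cov$, or (b) halts and returns an explicit integral solution covering at least $\m$ points with every covered point within distance $4r_i$ of some radius-$r_i$ center it opens. In case (b) we are done. If the method never halts in (b) and instead certifies $\CovP=\emptyset$, then, since a feasible instance would contribute a point to $\CovP$, we correctly declare infeasibility. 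Since $\CovP$ is contained in \ref{lp:trkco} by \Cref{clm:configiInLP} and the enclosing box is bounded, polynomially many iterations suffice; below we assume $r_2<r_1/2$, which is without loss of generality up to a constant factor (one may merge the two radii otherwise).

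\textbf{The oracle.}
On input $\cov$ the oracle first tests membership in \ref{lp:trkco}: $\cov(v)\le 1$ and $\sum_v\cov(v)\ge\m$ are explicit, while the existence of fractional $r_1$-centers on $Y$ and $r_2$-centers on $X$ realizing $(\cov_1,\cov_2)$ within the budgets is a polynomial-size transportation-feasibility test; any violated inequality found is valid for $\CovP$ by \Cref{clm:configiInLP}, so we output it. Otherwise we run the CGK reduction (\Cref{alg:reduce}) with $\alpha_1=\alpha_2=2$, but we force the root layer to lie inside $Y$: since $Y$ is $4r_1$-separated, each representative $w\in L_2$ that is $r_1$-covered by some $Y$-center has a \emph{unique} nearby point $\pi(w)\in Y$, and we set $p(w):=\pi(w)$, $L_1:=\pi(L_2)$, attaching the remaining $L_2$-representatives to a non-selectable dummy root. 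Re-running the proof of \Cref{clm:givesFracSln} for this variant shows the induced assignment $y$ is feasible for \ref{lp:tlff} with value at least $\m$, and re-running \Cref{lma:approx} shows that an integral \tlff solution of weight at least $\m$ yields an integral \trkc-solution covering at least $\m$ points, with dilation at most $4$ (points served through the root layer lie within $r_1+O(r_2)$ of a chosen $Y$-center, and those served through $L_2$ within $2r_2$ of a chosen $r_2$-center). The oracle then runs the \tlff dynamic program of \Cref{def:tlff}: if $\cT$ is valuable it halts and returns this $4$-approximate solution.

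\textbf{The cut.}
It remains to produce a cut when $\cT$ is \emph{not} valuable. The crucial point, and the reason the well-separated hypothesis is indispensable, is a \emph{faithfulness} lemma: because the root layer $L_1$ lies in the $4r_1$-separated set $Y$, every feasible \septrkco solution $(S_1\subseteq Y,S_2)$ covering at least $\m$ points induces a feasible \tlff solution of weight at least $\m$ in $\cT$ --- each covered point has an unambiguous $r_1$-parent and each $r_1$-center of $S$ serves a single star $\leaf(u)$, so no coverage is lost or double-counted. (This is exactly what fails in general: in \Cref{fig:gaplvl2legitmetric} the $r_1$-centers are only $r_1$ apart, the reduction loses a point, and the resulting tree is not valuable even though the instance is feasible.) Consequently, if $\cov$ were in $\CovP$, averaging over its support would give an integral solution covering at least $\m$ points, so $\cT$ would be valuable --- contradiction. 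To make this an explicit inequality we use that \tlff admits a polynomial-size \emph{integral} extended formulation built from its dynamic program: non-valuability means the value-$\ge\m$ point associated with $\cov$ lies outside this integer hull, so linear programming over the formulation produces a separating hyperplane, which --- validity being checked solution-by-solution on the vertices of $\CovP$ via the faithfulness lemma --- pulls back to an inequality valid for $\CovP$ and violated at the current $\cov$. This last step adapts the round-or-cut analysis of \cite{CN18}. We feed the resulting inequality back to the ellipsoid method.

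\textbf{Main obstacle.}
The routine ingredients --- the \ref{lp:trkco} membership test, the \tlff dynamic program, the rounding via the adapted \Cref{lma:approx}, and the ellipsoid bookkeeping --- are standard. The heart of the argument is the cut step, and within it two points need care. First, the faithfulness lemma: one must use the full strength of the $4r_1$-separation of $Y$ to show that restricting $r_1$-centers to $Y$ renders the \tlff abstraction lossless, in particular controlling the interaction between the $L_2$-clustering radius and the $r_1$-parent assignment $\pi$ (this is where the normalization $r_2<r_1/2$ is convenient). Second, transferring a hyperplane from the \tlff integer hull to a genuinely \emph{linear} inequality valid for $\CovP$, despite the map $\cov\mapsto y(\cov)$ being only piecewise-linear; we resolve this by verifying validity directly on the integral solutions (the vertices of $\CovP$) rather than through any linearity of the map, possibly after lifting $\CovP$ with a few auxiliary variables so that its correspondence with \ref{lp:tlff} becomes affine.
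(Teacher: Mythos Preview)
Your overall architecture matches the paper: run the ellipsoid method over $\CovP$, on each query run the CGK reduction with $\alpha_1=\alpha_2=2$, use the \tlff dynamic program to check valuability, and round via \Cref{lma:approx} when valuable. The divergence, and the gap, is entirely in the cut step.

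Your cut rests on a ``faithfulness lemma'': any integral $(S_1\subseteq Y,S_2)$ covering at least $\m$ points induces a \tlff solution of weight at least $\m$ in the tree $\cT$ built from the \emph{current} fractional $\hat\cov$. This is false. The tree $\cT$ is built by \HS in decreasing order of $\hat\cov$, so a point $u$ covered by $S$ can be absorbed into $\chld_2(v)$ for a representative $v\in L_2$ that $S$ does \emph{not} cover; the induced $T$ then misses all of $\chld_2(v)$ and the weight drops below $\m$. Well-separatedness of $Y$ only controls the $r_1$ layer (your own parenthetical concedes this), and your normalization $r_2<r_1/2$ does not repair the $r_2$ layer. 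Your subsequent plan---separate $y(\hat\cov)$ from the \tlff integer hull via the DP extended formulation and ``pull back'' through the piecewise-linear map, ``possibly after lifting $\CovP$ with a few auxiliary variables''---is exactly where the difficulty lies, and you leave it unresolved. Note also that if your faithfulness lemma \emph{were} true it would obviate the whole round-or-cut loop: non-valuability of $\cT$ would immediately certify infeasibility.

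The paper sidesteps both problems by writing down a single explicit inequality. It shows (\Cref{lma:hyp:covchld}) that if $\cT$ is not valuable then
\[
\sum_{v\in L_2}\w(v)\,\cov(v)\ \le\ \m-1
\]
is valid for every $\cov\in\CovP$; the proof maps each integral $S\in\cF$ to a \tlff solution $T$ and uses well-separatedness (via \Cref{clm:nonfrac}) to prove the \emph{weaker} bound $\w(\cC(T))\ge\sum_{v\in L_2:\,S\in\cF(v)}\w(v)$, which is all one needs after averaging over the $z_S$'s. The current $\hat\cov$ violates this inequality because the greedy \HS order guarantees $\sum_{v\in L_2}\w(v)\hat\cov(v)\ge\sum_{u\in X}\hat\cov(u)\ge\m$. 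No extended formulation, no pull-back, no auxiliary lifting. Incidentally, the paper's modification of CGK is also lighter than yours: rather than forcing $L_1\subseteq Y$ and introducing a dummy root, it simply lets \HS prioritize points within $r_1$ of $Y$ when building $L_1$, which is enough for \Cref{clm:nonfrac}.
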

\noindent
We remark the natural \eqref{lp:trkco} relaxation still has a bad integrality gap, and we need the round-or-cut approach. Formally, given fractional coverages $(\cov_1,\cov_2)$ we run \Cref{alg:reduce} (with $\alpha_1 = \alpha_2 = 2$) to get a \tlff instance.
If the instance is valuable, we are done by \Cref{lma:approx}. Otherwise, we prove that $(\cov_1,\cov_2) \notin \CovP$ by exhibiting a separating hyperplane. This crucially uses the well-separated-ness of the instance and indeed, the bad example shown in~\Cref{fig:gaplvl2legitmetric} is {\em not} well-separated. This implies \Cref{thm:approxsep} using the ellipsoid method. \smallskip

In summary, to prove~\Cref{thm:mainthm}, we start with $(\cov_1,\cov_2)$ purported to be in $\CovP$. Our goal is to either get a constant approximation, or separate $(\cov_1,\cov_2)$ from $\CovP$.
We first run the CGK~\Cref{alg:reduce} with $\alpha_1 = 8$ and $\alpha_2 = 2$. If $\cov_1(L_1) \leq k_1 - 2$, we can assert that the \tlff instance is valuable and get a $10$-approximation. Otherwise, $\cov_1(L_1) > k_1 - 2$, and we guess the $O(n)$ many possible centers ``far away'' from $L_1$, and obtain that many well-separated instances. We run the algorithm promised by ~\Cref{thm:approxsep} on each of them. If any one of them gives a $4$-approximate solution, then we immediately get an $8$-approximate\footnote{The factor doubles as we need to double the radius, but that is a technicality.} solution to the original instance. If {\em all} of them fail, then we can assert $\cov_1(L_1) \leq k_1 -2$ must be a {\em valid} inequality for $\CovP$, and thus obtain a hyperplane separating $(\cov_1,\cov_2)$ from $\CovP$. The polynomial running time is implied by the ellipsoid algorithm. Note that there are two nested runs of the ellipsoid method in the algorithm. \Cref{fig:flow} below shows an illustration of the ideas.

\begin{figure}[ht]
    \centering
    \includegraphics[width=\textwidth]{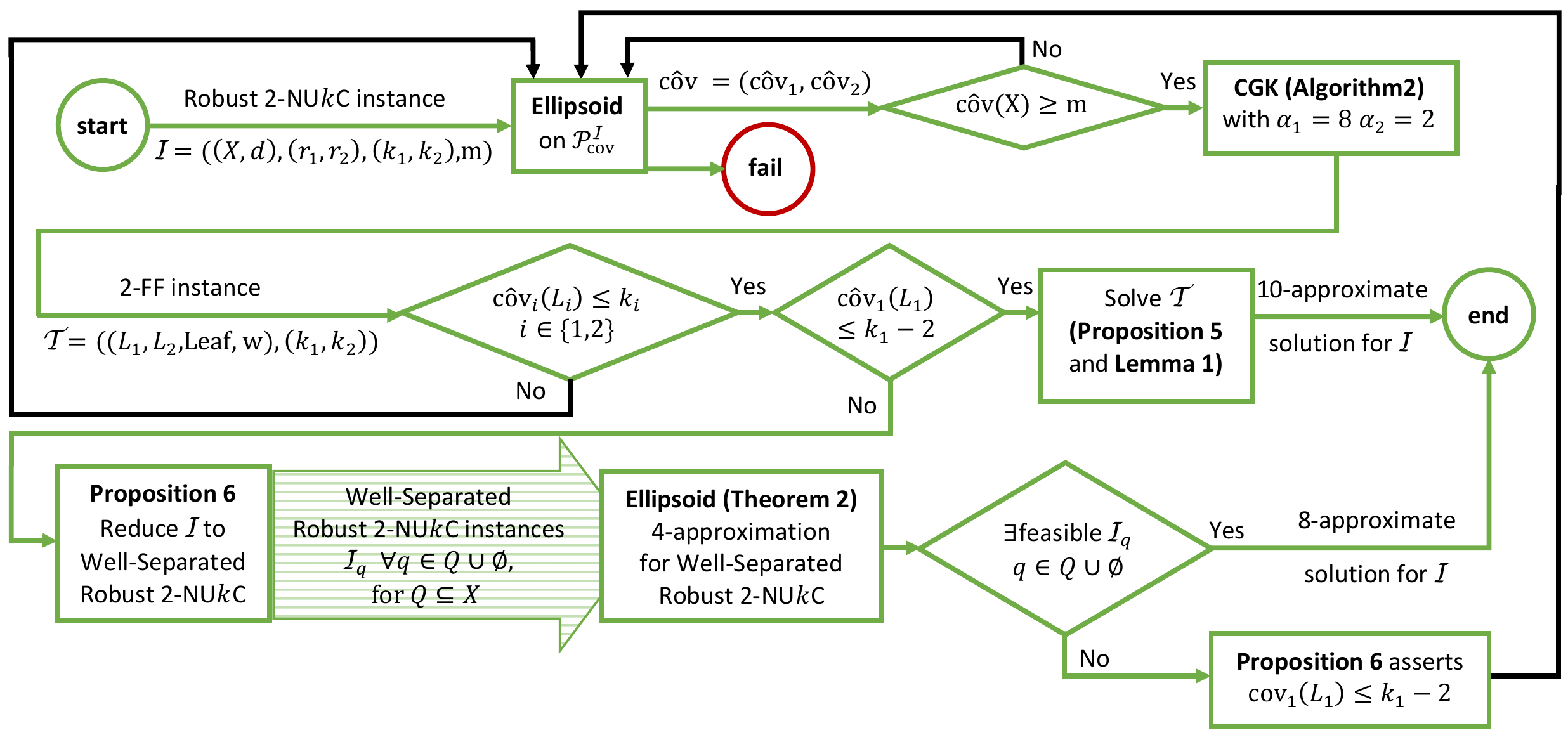}
    \caption{Our framework for approximating \trkco. The three black arrows each represent separating hyperplanes we feed to the outer ellipsoid. The box in the bottom row stating ``4-approximation for well-separated \trkco'' runs the 
    	inner ellipsoid method.}
    \label{fig:flow}
\end{figure}

\subsection{Discussion}
Before we move to describing algorithms proving~\Cref{thm:approxsep} and~\Cref{thm:mainthm}, let us point out why the above set of ideas does {\em not} suffice to prove the full CGK conjecture, that is, give an $O(1)$-approximation for \nukc with constant many type of radii. Given fractional coverages, the CGK algorithm now returns a $t$-layered firefighter instance and again if such an instance is valuable (which can be checked in $n^{O(t)}$ time), we get an $O(1)$-approximation. As above, the main challenge is when the firefighter instance is not valuable.
~\Cref{thm:approxsep}, in fact, does generalize if {\em all} layers are separated. Formally, if there are $t$ types of radii, and there are $t$ sets $Y_1,\ldots, Y_t$ such that (a) any two points $p,q\in Y_i$ are well-separated, that is, $d(p,q) > 4r_i$, and (b) the $r_i$-radii centers are only allowed to be opened in $Y_i$, then in fact there is an $O(1)$-approximation for such instances. Furthermore, if we had fractional coverages  $(\cov_1,\cov_2,\ldots,\cov_t)$ such that in the $t$-layered firefighter instance returned, {\em all} layers have ``slack'', that is $\cov_i(L_i) \leq k_i - t$, then one can repeatedly use~\Cref{lma:loose} to show that the tree instance is indeed valuable.

The issue we do not know how to circumvent is when some layers have slack and some layers do not. In particular, even with $3$ kinds of radii, we do not know how to handle the case when the first layer $L_1$ is well-separated and $\cov_1(L_1) = k_1$, but the second layer has slack $\cov_2(L_2) \leq k_2 - 3$.~\Cref{lma:loose} does not help since all the loose vertices may be in $L_1$, but they cannot all be picked without violating the budget. At the same time, we do not know how to separate such $\cov$'s, or whether such a situation arises when $\cov$'s are in the integer hull.
We believe one needs more ideas to resolve the CGK conjecture.

\section{Approximating \septrkco}
In this section we prove~\Cref{thm:approxsep} stated in~\Cref{sec:ouridea}. As mentioned there, the main idea is to run the round-or-cut method, and in particular use ideas from a previous paper~\cite{CN18} of ours.
The main technical lemma is the following.
\def\pcov{\hat{\cov}}
\begin{lemma}\label{lma:sepsep}
	Given \septrkco instance $\cI$ and fractional coverages $(\pcov_1(v),\pcov_2(v))$, if the output of the CGK~\Cref{alg:reduce} is not valuable, there is a hyperplane separating $(\pcov_1(v),\pcov_2(v))$ from $\CovP$.
	Furthermore, the coefficients of this hyperplane are bounded in value by $|X|$.
\end{lemma}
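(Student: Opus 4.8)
The plan is to run a \emph{round-or-cut} loop over $\CovP$: at the current candidate point $(\pcov_1,\pcov_2)$ we first check the easy constraint $\sum_{v}(\pcov_1(v)+\pcov_2(v))\ge \m$ (returning the $\CovP$-constraint \eqref{eq:P1} as the separating hyperplane otherwise), and then feed $(\pcov_1,\pcov_2)$ to the \cgk reduction of \Cref{alg:reduce} with $\alpha_1=\alpha_2=2$ to obtain the height-$2$ instance $\cT=\tlffinst$. By hypothesis $\cT$ is not valuable, so the maximum weight of an integral \tlff solution is some $\m^\ast<\m$; both this value and a dual certificate can be produced in polynomial time, either directly from the $O(n^3)$ dynamic program or, more conveniently, from the integral polynomial-size extended formulation of the \tlff polytope that the DP induces (à la Kaibel). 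The LP dual of that extended formulation gives multipliers $\mu_1,\mu_2\ge 0$ and per-star potentials $\pi_u$ ($u\in L_1$) built from the leaf-weights $\w(\leaf(u))\le |X|$ and small integer combinations of $\mu_1,\mu_2$, with $\sum_u\pi_u+k_1\mu_1+k_2\mu_2=\m^\ast<\m$. The remaining — and central — task is to \emph{lift} this certificate to an inequality in the $\pcov$-variables that is valid for $\CovP$ and still violated at $(\pcov_1,\pcov_2)$.

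For the lift I would argue the contrapositive: any $(\pcov_1,\pcov_2)\in\CovP$, written as a convex combination $\sum_S z_S\chi^S$ of coverage vectors of feasible integral solutions $S=(S_1,S_2)$ with $S_1\subseteq Y$ and $|S_i|\le k_i$, forces $\cT$ to be valuable. The first ingredient is where the well-separated structure enters. Since $Y$ is $4r_1$-separated and, by \Cref{fact:HS}(b) with $\alpha_1=2$, $L_1$ is $2r_1$-separated, the assignment sending $c\in S_1$ to the unique $u\in L_1$ with $d(u,c)\le r_1$ (when such a $u$ exists) is a partial injection $S_1\hookrightarrow L_1$; hence $|\{u\in L_1:d(u,S_1)\le r_1\}|\le|S_1|\le k_1$ for every $S$, so $\pcov_1(L_1)\le k_1$ and, more usefully, the support of $\pcov_1|_{L_1}$ lies on $L_1$-vertices that each ``own'' a private center of $Y$. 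The second ingredient is \Cref{clm:givesFracSln}: using $\CovP\subseteq$\eqref{lp:trkco} (\Cref{clm:configiInLP}), it supplies a fractional \tlff solution $y$ of value $\ge\m$ with $y_u=\pcov_1(u)$ on $L_1$; taking a basic optimum $y^\ast$ and invoking \Cref{lma:loose}, at most two variables of $y^\ast$ are loose, so rounding those up yields an integral \tlff solution of value $\ge\m$ exceeding the budgets by at most two in total. The well-separatedness (the injection above, together with $\pcov_1(L_1)\le k_1$ and the fact that a loose $L_1$-variable corresponds to a single center of $Y$, not several entangled ones) is then used to re-absorb this overflow without losing weight, producing a genuine weight-$\ge\m$ integral \tlff solution, i.e.\ showing $\cT$ valuable.

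I expect the re-absorption to be the main obstacle, and it is genuinely delicate because the \cgk reduction is \emph{lossy}: the factor-$2$ dilation inside \HS means a leaf $v\in L_2$ whose cluster meets the coverage of $S$ need not be tree-coverable by the naive image of $S$ within budget, so one cannot simply route integral \trkco solutions through the tree and compare weights. Following the round-or-cut approach of~\cite{CN18}, I would instead stay at the level of the fractional $y$ and its basic-optimal rounding, and show that the at most two ABZ-loose coordinates can be charged against the slack $k_1-\pcov_1(L_1)\ge 0$ and the per-star capacities of the extended formulation — the $4r_1$-net $Y$ being exactly what prevents one unit of loose $r_1$-budget from being needed in two places at once. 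Once $\cT$ is shown valuable whenever $(\pcov_1,\pcov_2)\in\CovP$, the dual certificate of non-valuableness transfers to an inequality on $(\pcov_1,\pcov_2)$ — a combination of $\pi_u$-type leaf-weight terms and $\mu_i$-type budget terms applied to the coverages — that is valid for $\CovP$ but violated at the current point; since every coefficient is a leaf-weight $\w(\leaf(u))\le|X|$ or a bounded integer multiple of $\mu_1,\mu_2$ (which may themselves be taken bounded by $|X|$), the hyperplane has coefficients bounded by $|X|$, as claimed.
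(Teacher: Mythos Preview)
Your proposal has a genuine gap in the ``re-absorption'' step, and in fact you discard the approach that actually works.

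You attempt to show that $(\pcov_1,\pcov_2)\in\CovP$ forces $\cT$ to be valuable by taking the fractional $y$ from \Cref{clm:givesFracSln}, passing to a basic feasible optimum, invoking \Cref{lma:loose}, and then ``re-absorbing'' the at most two units of budget overflow using well-separatedness. But \Cref{lma:loose} is only useful when there is \emph{slack} in the level-1 budget: the paper uses it (in \Cref{clm:case1}) precisely under the hypothesis $y(L_1)\le k_1-2$, because then the integral $L_1$-vertices number at most $k_1-2$ and the two loose ones bring the total to $\le k_1$. In the well-separated setting you have no such slack---indeed $\pcov_1(L_1)$ can equal $k_1$ exactly---and the $4r_1$-separation of $Y$ does not manufacture it. Your sentence ``one unit of loose $r_1$-budget [cannot be] needed in two places at once'' is true but irrelevant: the overflow is additive, not a double-counting artifact. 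You yourself flag this step as ``the main obstacle'' and leave it unsubstantiated; as written the argument does not close.

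What the paper does instead is exactly the route you dismiss as ``lossy'': it \emph{does} route each integral $S=(S_1,S_2)\in\cF$ through the tree via the naive map $T_i=\{u\in L_i:d(u,S_i)\le r_i\}$. The point is that in the well-separated case this map is \emph{not} lossy. The key fact (\Cref{clm:nonfrac}) is that if $u\in L_1\setminus T_1$ then no leaf of $u$ is covered by $S_1$: because $S_1\subseteq Y$ and $Y$ is $4r_1$-separated while leaves are within $2r_1$ of their parent, a center of $S_1$ covering a leaf of $u$ would have to be the unique $Y$-point near $u$ and hence cover $u$ itself. This gives $\sum_{v\in L_2:\,S\in\cF(v)}\w(v)\le \w(\cC(T))\le \m-1$ for every $S$, and averaging over the $z_S$'s yields the single clean valid inequality
\[
\sum_{v\in L_2}\w(v)\,\cov(v)\;\le\;\m-1
\]
for all $\cov\in\CovP$, with coefficients $\w(v)=|\chld_2(v)|\le|X|$. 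The candidate $\pcov$ violates it because $\sum_{v\in L_2}\w(v)\,\pcov(v)\ge\sum_{u\in X}\pcov(u)\ge \m$ by \Cref{fact:HS}(c,d). No ABZ rounding, no dual certificates of an extended formulation, are needed.
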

\remark{We need to be careful in one place. Recall that \HS is used in the CGK~\Cref{alg:reduce}. We need to assert in \HS, that points $u$ with $d(u,Y) \leq r_1$ are prioritized over points $v$ with $d(v,Y) > r_1$ to be taken in $L_1$. This is w.l.o.g. since $\cov_1(v) = 0$ if $d(v,Y) > r_1$ by definition of \septrkco.}

	\noindent
Using the ellipsoid method, the above lemma implies~\Cref{thm:approxsep}\conf{ (See the full version of the paper for a detailed proof)}.
\full{
\begin{proof}[Proof of \Cref{thm:approxsep}] The goal is to either prove $\CovP$ is empty, or give a $4$-approximate solution. To do so, we run the ellipsoid algorithm. Each time the ellipsoid algorithm provides a purported fractional point
	$(\pcov_1(v), \pcov_2(v))\in \CovP$ and asks for a separating hyperplane. Given such a solution, we first check if $\pcov_1(v) = 0$ for all $v$ with $d(v,Y) > r_1$. By the well-separatedness property of $\cI$, this must be a valid equality and we can force the ellipsoid method to run over these equalities.
	Then we run CGK~\Cref{alg:reduce} with this $(\pcov_1,\pcov_2)$ and $\alpha_1 = \alpha_2 = 2$. If the resulting \tlff instance is valuable, we get a $4$-approximate solution by \Cref{lma:approx}. If not, \Cref{lma:sepsep} provides a separating hyperplane to feed to ellipsoid. Since our hyperplanes can be described in polynomial size, ellipsoid terminates in polynomial time, either giving us some $\pcov$ leading  to a $4$-approximation along the way, or prompts that $\CovP$ is empty thereby proving $\cI$ is infeasible.
\end{proof}
}
\noindent
The rest of this section is dedicated to proving \Cref{lma:sepsep}.
Fix a well-separated \trkco instance $\cI$. Recall that $Y\subseteq X$ is a subset of points, and the radius $r_1$ centers are only allowed to be opened at $Y$.
Let $\cT$ be the \tlff instance output by \Cref{alg:reduce} on $\cI$ and $\cov$ with $\alpha_1=\alpha_2=2$. Recall, $\cT = (\tree, k_1,k_2)$.
The key part of the proof is the following valid inequality in case $\cT$ is not valuable.
\begin{lemma}\label{lma:hyp:covchld}
If $\cT$ is not valuable  
 $\sum_{v \in L_2} \w(v)\cov(v) \leq \m-1$
for any $\cov(v)\in \CovP$.
\end{lemma}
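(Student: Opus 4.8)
The plan is to verify the inequality on the generators of $\CovP$ and extend by linearity. By definition every $\cov\in\CovP$ is a convex combination $\cov=\sum_{S\in\cF}z_S\,\cov^S$ of the coverage vectors $\cov^S$ of integral solutions $S=(S_1,S_2)$ with $|S_i|\le k_i$ and $S_1\subseteq Y$, where $\cov^S_i(v)=1$ exactly when $S\in\cF_i(v)$; since $\cF_1(v)$ and $\cF_2(v)$ are disjoint, $\cov^S(v)\in\{0,1\}$. As $\w\ge 0$, it suffices to show $\sum_{v\in L_2}\w(v)\cov^S(v)\le\m-1$ for each such $S$ and then average with the coefficients $z_S$. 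For a fixed $S$ the idea is to manufacture a feasible \tlff solution $T^S=(T_1,T_2)$ of $\cT$ whose covered-leaf set $\cC(T^S)$ contains every leaf $v$ with $\cov^S(v)=1$. Since $\cov^S(v)\in\{0,1\}$ this yields $\sum_{v\in L_2}\w(v)\cov^S(v)=\w(\{v\in L_2:\cov^S(v)=1\})\le\w(\cC(T^S))$, and since $\cT$ is not valuable and the $\w(v)$ are integers, $\w(\cC(T^S))\le\m-1$, which is exactly what we need.

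To build $T^S$, for the leaf layer I would take $T_2:=\{v\in L_2:\cov^S_2(v)=1\}$. The budget is respected for free: each such $v$ lies in a ball $B(c',r_2)$ with $c'\in S_2$, and since $L_2$ is produced by \HS with separation radius $\alpha_2r_2=2r_2$ its points are pairwise more than $2r_2$ apart (\Cref{fact:HS}), while $\mathrm{diam}\,B(c',r_2)\le 2r_2$; thus each ball contains at most one point of $L_2$, and $|T_2|\le|S_2|\le k_2$. For the root layer I would take $T_1:=\{u_c:c\in S_1\}$, where $u_c\in L_1$ is a node such that \emph{every} point of $B(c,r_1)\cap L_2$ has $u_c$ as its parent in $\cT$. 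Granting that such $u_c$ exist: they are distinct for distinct $c$ (each lies in $B(c,r_1)$, so $u_c=u_{c'}$ would give $d(c,c')\le 2r_1<4r_1$, impossible inside $Y$), so $|T_1|\le|S_1|\le k_1$; and any leaf $v$ with $\cov^S_1(v)=1$ lies in $B(c,r_1)\cap L_2$ for some $c\in S_1$, so its parent $u_c\in T_1$ and $v\in\cC(T^S)$. Together with $T_2$, this gives $\cC(T^S)\supseteq\{v\in L_2:\cov^S(v)=1\}$, so all that remains is to produce these common parents.

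That last step is the main obstacle, and it is where the $4r_1$-separation of $Y$ is essential. The argument I have in mind: fix $c\in S_1\subseteq Y$ and set $A:=B(c,r_1)\cap L_2$, so $\mathrm{diam}\,A\le 2r_1$, every point of $A$ is within $r_1$ of $Y$, and $c$ is its unique $Y$-point within $2r_1$. Consider the \HS call in \Cref{alg:reduce} that builds $L_1$, with separation radius $\alpha_1r_1=2r_1$; using the prioritization recorded in the remark after \Cref{lma:sepsep} and the (validly assumed) fact that $\cov_1(v)=0$ whenever $d(v,Y)>r_1$, every point of $L_2$ within $r_1$ of $Y$ is processed before every point that is not. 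Let $a^*$ be the first point of $A$ processed. If some representative $u'$ processed before $a^*$ captured a point $a\in A$, then $d(u',c)\le d(u',a)+d(a,c)\le 2r_1+r_1=3r_1$; but $u'$ precedes $a^*\in A$, so $d(u',Y)\le r_1$, and the only $Y$-point that can be within $r_1$ of $u'$ is $c$ (any other $c''$ gives $d(c,c'')\le 3r_1+r_1=4r_1$, contradicting separation), so $u'\in A$ — impossible, as $a^*$ is the first point of $A$ processed. Hence no earlier representative touches $A$, $a^*$ survives to become a representative, and since $\mathrm{diam}\,A\le 2r_1$ all of $A$ lands in $\chld_1(a^*)=\leaf(a^*)$; take $u_c:=a^*$. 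Finally, chaining $\sum_{v\in L_2}\w(v)\cov^S(v)\le\w(\cC(T^S))\le\m-1$ over the support of $z$ and averaging gives the lemma. I expect the only genuinely delicate point to be the bookkeeping for the \HS processing order — in particular, that the points within $r_1$ of $Y$ form a prefix of it; everything else is the triangle inequality, with $\alpha_1=\alpha_2=2$ and the separation parameter $4r_1$ fitting together tightly.
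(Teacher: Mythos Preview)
Your proof is correct and follows the same high-level strategy as the paper: write $\cov$ as a convex combination of integral solutions $S=(S_1,S_2)$, build from each $S$ a feasible \tlff solution $T^S$ whose covered leaves contain every $v\in L_2$ with $\cov^S(v)=1$, bound $\w(\cC(T^S))\le\m-1$ by non-valuability, and average.

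The only real difference is in how $T_1$ is handled. The paper defines $T_1:=\{u\in L_1: d(u,S_1)\le r_1\}$ and proves directly (their Proposition~\ref{clm:nonfrac}) the contrapositive you need: if $u\in L_1\setminus T_1$ then no $v\in\leaf(u)$ is $r_1$-covered by $S$. Their argument is short --- if such a $v$ were covered by $f\in S_1\subseteq Y$, then by the priority rule $u$ itself has some $f_u\in Y$ within $r_1$, and the triangle inequality forces $f_u=f$, putting $u$ in $T_1$. Your route instead constructs $T_1$ as $\{u_c:c\in S_1\}$ by showing each ball $B(c,r_1)\cap L_2$ has a common parent, which requires tracking the \HS processing order and arguing that the first representative to touch $B(c,r_1)\cap L_2$ must already lie in it. This yields exactly the same set $T_1$ (each $u_c\in B(c,r_1)$ so $u_c$ is in the paper's $T_1$; conversely any $u$ in the paper's $T_1$ lies in some $B(c,r_1)\cap L_2$, hence in $\leaf(u_c)$, hence equals $u_c$), but your analysis is more elaborate. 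One small expository wrinkle: your phrase ``the first point of $A$ processed'' presupposes that some point of $A$ becomes a representative; the argument you give actually establishes this, but it reads more cleanly if you instead let $u'$ be the first representative (in \HS order) that captures any point of $A$ and show $u'\in A$.
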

\noindent
Before we prove~\Cref{lma:hyp:covchld}, let us show how it proves~\Cref{lma:sepsep}. Given $(\pcov_1,\pcov_2)$ we first check\footnote{recall, $\pcov(v) = \pcov_1(v) + \pcov_2(v)$.} that $\sum_{u\in X} \pcov(u) \geq \m$, or otherwise that would be the hyperplane separating it from $\CovP$.
Now recall that in \Cref{alg:reduce}, for $v \in L_2$, $\w(v) = |\chld_2(v)|$ which is the number of points assigned to $v$ by \HS (see \Cref{ln:L2} of \Cref{alg:reduce}). By definition of $\w$ and then parts d) and c) of \Cref{fact:HS},
\begin{equation*}
  \sum_{v \in L_2} \w(v)\pcov(v) = \sum_{v \in L_2} \sum_{u \in \chld(v)}\pcov(v) \geq \sum_{v \in L_2} \sum_{u \in \chld(v)}\pcov(u) = \sum_{u \in X} \pcov(u) \geq \m.  
\end{equation*}
\noindent
That is, $(\pcov_1,\pcov_2)$ violates the valid inequality asserted in~\Cref{lma:hyp:covchld}, and this would complete the proof of~\Cref{lma:sepsep}.
All that remains is to prove the valid inequality lemma above.

\begin{proof}[of~\Cref{lma:hyp:covchld}]
Fix a solution $\cov\in \CovP$ and note that this is a convex combination of coverages induced by {integral} feasible solutions in $\cF$. The main idea of the proof is to use the solutions in $\cF$ to construct {\em solutions} to the tree instance $\cT$. Since $\cT$ is {\em not} valuable, each of these solutions will have ``small'' value, and then we use this to prove the lemma. To this end, fix $S = (S_1,S_2) \in \cF$ where $|S_i| \leq k_i$ for $i\in \{1,2\}$. The corresponding solution $T = (T_1,T_2)$ for $\cT$ is defined as follows: For $i \in \{1,2\}$ and any $u \in L_i$, $u$ is in $T_i$ iff $S_i \in \cF_i(u)$. That is, $d(u,S_i)\leq r_i$.

\begin{proposition}\label{clm:Tinbudget}
$T$ satisfies the budget constraints $|T_i| \leq k_i$ for $i \in \{1,2\}$.
\end{proposition}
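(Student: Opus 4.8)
The plan is to verify the two budget constraints $|T_1| \le k_1$ and $|T_2| \le k_2$ separately, and the key point in each case is that distinct vertices of $L_i$ are far enough apart that a single center of $S_i$ cannot be responsible for putting two of them into $T_i$. Concretely, recall $u \in T_i$ iff $d(u, S_i) \le r_i$, i.e.\ iff there is some $c \in S_i$ with $d(u,c) \le r_i$; I want to exhibit an injection from $T_i$ into $S_i$.

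For the $r_2$-layer, the argument is the standard one. The set $L_2$ is produced by \HS on $(X,d)$ with parameter $\alpha_2 r_2 = 2r_2$, so by \Cref{fact:HS}(b) any two distinct $u,v \in L_2$ satisfy $d(u,v) > 2r_2$. If some $c \in S_2$ had $d(u,c) \le r_2$ and $d(v,c) \le r_2$ for distinct $u,v \in T_2$, then $d(u,v) \le 2r_2$ by the triangle inequality — a contradiction. Hence the map sending each $u \in T_2$ to a witnessing center in $S_2$ is injective, so $|T_2| \le |S_2| \le k_2$.

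For the $r_1$-layer, I would use the well-separatedness of the \septrkco instance rather than only the \HS guarantee. Here centers of $S_1$ lie in $Y$, and any two distinct points of $Y$ are more than $4r_1 > 2r_1$ apart by definition of \septrkco; also every $u \in L_1$ with $\cov_1(u)>0$ has $d(u,Y)\le r_1$ (by the remark preceding the proof, \HS is run so that such points are prioritized into $L_1$, and points with $d(v,Y)>r_1$ have $\cov_1(v)=0$). The same triangle-inequality argument as above then shows that no single $c \in S_1 \subseteq Y$ can witness two distinct members of $T_1$: if $d(u,c)\le r_1$ and $d(v,c)\le r_1$ then $d(u,v)\le 2r_1$, contradicting $d(u,v) > \alpha_1 r_1 \ge 2r_1$ from \Cref{fact:HS}(b) applied to the \HS call that builds $L_1$. (Alternatively one can map $u \in T_1$ to the unique point of $Y$ within distance $r_1$ of it, which by $4r_1$-separation of $Y$ is well-defined and injective, and lands in the support $S_1$.) Either way the map $T_1 \to S_1$ is injective, giving $|T_1| \le |S_1| \le k_1$.

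The only subtlety — and the one place I would be careful — is making sure the $r_1$-layer injection actually lands inside $S_1$ and not merely inside $Y$: this is exactly where we need that each $u \in T_1$ has a witnessing center that is itself an element of $S_1$, which holds by the very definition $u \in T_1 \iff d(u,S_1)\le r_1$. Given that, there is no real obstacle; both bounds follow from separation plus the triangle inequality.
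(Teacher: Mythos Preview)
Your proof is correct and follows essentially the same approach as the paper: for each $i$, use the separation guarantee $d(u,v) > 2r_i$ on distinct $u,v \in L_i$ from \Cref{fact:HS}(b) (with $\alpha_i = 2$) together with the triangle inequality to conclude that no single center of $S_i$ can witness two members of $T_i$, giving an injection $T_i \to S_i$. The one unnecessary detour is your invocation of the well-separatedness of $Y$ for the $r_1$-layer: the paper does not use it here (it is saved for \Cref{clm:nonfrac}), and indeed your own contradiction ultimately appeals only to $d(u,v) > \alpha_1 r_1 = 2r_1$ from the \HS call, which already suffices.
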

\full{\begin{proof}
For $i \in \{1,2\}$ and two different $u,v \in T_i$ by \Cref{fact:HS} and our choice of $\alpha_i = 2$, $d(u,v) > 2r_i$. By the triangle inequality, a facility in $S_i$ cannot cover both $u$ and $v$ meaning $|T_i| \leq |S_i| \leq k_i$.
\end{proof}}
\noindent
The next claim is the only place where we need the well-separated-ness of $\cI$.
Basically, we will argue that the leaves covered by $T_1$ capture all the points covered by $S_1$.
\begin{proposition}\label{clm:nonfrac}
If $u \in L_1$ but $u \notin T_1$ then no $v \in \leaf(u)$ can be covered by a ball of radius $r_1$ in $S_1$.
\end{proposition}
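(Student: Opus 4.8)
The plan is to exploit the well-separatedness of $Y$ together with the way $L_1$ is constructed by \HS{}. Suppose $u \in L_1$ with $u \notin T_1$, and suppose for contradiction that some $v \in \leaf(u)$ is covered by a ball of radius $r_1$ centered at a point $c \in S_1$; so $d(v,c) \le r_1$. Since the instance is \septrkco, we have $c \in Y$. The goal is to derive that in fact $d(u,S_1) \le r_1$, i.e.\ that $u$ should have been put into $T_1$, which is the desired contradiction.

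First I would recall that $v \in \leaf(u) = \chld_1(u)$, so by \Cref{fact:HS}(a) applied to the call $\HS((L_2,d),\alpha_1 r_1,\cov_1)$ we have $d(u,v) \le \alpha_1 r_1$; however $\alpha_1 = 2$ here, so this only gives $d(u,c) \le d(u,v) + d(v,c) \le 2r_1 + r_1 = 3r_1$, which is not yet enough. This is where the remark after \Cref{lma:sepsep} and the well-separatedness of $Y$ come in: the key point is that there can be \emph{at most one} point of $Y$ within distance $r_1$ of the cluster around $u$, because any two points of $Y$ are more than $4r_1$ apart. Concretely, since $\cov_1(v) = \sum_{w \in B(v,r_1) \cap Y} x_{1,w}$ and the points $\leaf(u)$ were all removed from the metric when $u$ was selected, I would argue that the unique $Y$-point relevant to covering $\leaf(u)$ — call it $c$ — must also be within $r_1$ of $u$ itself. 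The mechanism: $u$ was chosen by \HS{} as the representative of its cluster precisely because $\cov_1(u) \ge \cov_1(v)$ for all $v \in \leaf(u)$ (\Cref{fact:HS}(d)); combined with the prioritization rule in the remark (points with $d(\cdot,Y)\le r_1$ are taken before points with $d(\cdot,Y)>r_1$), and the fact that $\cov_1(v)>0$ forces $v$ to have a $Y$-neighbor, one concludes $d(u,Y)\le r_1$. Then the unique $Y$-point achieving this must be the same $c$ (any other $Y$-point is $>4r_1$ from $c$, hence $>3r_1$ from $v$, hence cannot cover $v$ within $r_1$ while also — via triangle inequality with $d(u,v)\le 2r_1$ — both being close to $u$); so $d(u,c)\le r_1$, giving $S_1 \in \cF_1(u)$ and thus $u \in T_1$, a contradiction.

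The main obstacle I anticipate is making the "unique $Y$-point" argument airtight — specifically, pinning down that the $Y$-point $c$ that covers $v$ is forced to be within $r_1$ of $u$ rather than merely within $3r_1$. The clean way is to use well-separatedness twice: (i) at most one point of $Y$ lies in $B(v, r_1)$ for any fixed $v$ (trivial, since two would be $\le 2r_1 < 4r_1$ apart); and more importantly (ii) the ball $B(u, 2r_1 + r_1) = B(u,3r_1)$ contains at most one point of $Y$ (two such points would be $\le 6r_1$ apart — this is \emph{not} automatically $<4r_1$, so one actually needs $d(u,v)\le r_1$, not $2r_1$). That suggests the proof really does need the coarser cluster radius or a sharper distance bound; I would handle it by observing that since $\cov_1$-coverage only comes from $Y$, and $v$'s $Y$-neighbor $c$ certifies $\cov_1(v)>0$, the greedy/prioritized choice of $u$ forces $u$ to share that same neighbor — formally, $u$ and $v$ both lie in $B(c,r_1)$, so $d(u,c)\le r_1$ directly, without routing through $d(u,v)$ at all. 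That is the cleanest route: argue that the representative $u$ of the cluster, having maximum $\cov_1$ and being prioritized, inherits the covering $Y$-point of any leaf in its cluster. Then $d(u,c)\le r_1$ is immediate and the contradiction follows.
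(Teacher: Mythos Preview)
Your core argument is correct and is exactly the paper's proof: from $c\in S_1\subseteq Y$ and $d(v,c)\le r_1$ you get $d(v,Y)\le r_1$; the prioritization rule then forces $d(u,Y)\le r_1$, say via $f_u\in Y$; and well-separatedness plus $d(f_u,c)\le d(f_u,u)+d(u,v)+d(v,c)\le r_1+2r_1+r_1=4r_1$ forces $f_u=c$, so $d(u,c)\le r_1$ and $u\in T_1$. Two small cleanups: the appeal to ``$\cov_1(v)>0$'' is a red herring (that is a statement about the fractional $\pcov$, not about $S$; you get $d(v,Y)\le r_1$ directly from $c\in S_1\subseteq Y$), and your proposed ``cleanest route'' at the end does not actually avoid the triangle-inequality step---there is no way to conclude $d(u,c)\le r_1$ without comparing $f_u$ to $c$ via $d(u,v)$, so just present the argument you already have in the middle paragraph.
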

\full{\begin{proof}
We will prove the contrapositive by showing that if $u = p(v)$ and $v$ is covered through $f \in S_1$, then $u$ as well must be covered by the same $f$ and therefore, $u \in T_1$.
Consider the following two cases: either $d(u,Y) > r_1$ in which case, by our assumption on \HS, $v$ is prioritized over $u$ to be chosen in $L_1$ so this cannot happen. Thus, we must have $d(u,Y) \leq r_1$ which means there is $f_u \in Y$ with $d(u,f_u) \leq r_1$. This $f_u$ has to be equal to $f$ otherwise, by definition of $Y$ we must have $d(f,f_u) > 4r_1$ that contradicts the following:
   $d(f_u,f) \leq d(f_u, u) + d(u,v) + d(v,f) \leq r_1 + \alpha_1r_1 + r_1 = 4r_1$. \conf{\qed}
\end{proof}
}
Next, we can prove that overall, the leaves covered by $T$ capture the whole set of points covered by $S$. Recall that $\cC(T) = \{v \in L_2: v \in T_2 \lor p(v) \in T_1\}$ is the set of leaves covered by $T$. For $v \in X$ let $\cF(v) := \cF_1(v) \cup \cF_2(v)$ be the set of solutions that cover $v$.

\begin{proposition}\label{clm:lambdaleaf}
Take \tlff solution $T$ corresponding to \septrkco solution $S$ as described earlier.  We have:
\begin{equation*}
\sum\limits_{ \substack{v \in L_2: S \in \cF(v)}} \w(v) \leq \w(\cC(T)). 
\end{equation*}
That is, the total $\w$ of the points covered by $S$ is at most $\w(\cC(T))$.
\end{proposition}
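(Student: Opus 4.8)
The plan is to show that the index set appearing on the left-hand side is contained in $\cC(T)$. Since $\w$ is a non-negative weight function (indeed $\w(v)=|\chld_2(v)|\ge 1$ for every $v\in L_2$), this containment immediately yields
$\sum_{v\in L_2:\,S\in\cF(v)}\w(v)\le\sum_{v\in\cC(T)}\w(v)=\w(\cC(T))$, which is exactly the claimed inequality. So the whole proof reduces to the set inclusion $\{v\in L_2:\,S\in\cF(v)\}\subseteq\cC(T)$.

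To establish this, fix $v\in L_2$ with $S\in\cF(v)=\cF_1(v)\cup\cF_2(v)$ and split into two cases according to which of these two (disjoint, by convention) sets contains $S$. If $S\in\cF_2(v)$, then $d(v,S_2)\le r_2$, i.e.\ $S_2\in\cF_2(v)$; by the construction of $T$ (namely $u\in T_2$ iff $S_2\in\cF_2(u)$) this gives $v\in T_2$, hence $v\in\cC(T)$. If instead $S\in\cF_1(v)$, then $v$ is covered by a ball of radius $r_1$ in $S_1$; writing $u=p(v)\in L_1$, so that $v\in\leaf(u)$, the contrapositive of \Cref{clm:nonfrac} yields $u\in T_1$, and therefore $v\in\cC(T)$ in this case as well.

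Combining the two cases shows $\{v\in L_2:\,S\in\cF(v)\}\subseteq\cC(T)$, completing the argument. The only points requiring care are purely bookkeeping: checking that the case split is exhaustive (it is, since $\cF(v)$ is by definition the union $\cF_1(v)\cup\cF_2(v)$ and the disjointness convention plays no role beyond letting us treat the two cases separately), and checking that the hypotheses of \Cref{clm:nonfrac} line up precisely with the ``$S$ covers $v$ with radius $r_1$'' situation. I do not expect any genuine obstacle here: all of the geometric content---in particular the use of well-separatedness of $\cI$---has already been isolated into \Cref{clm:nonfrac}, and what remains is a direct translation between the cover sets $\cF_i$ of $S$ and the cover set $\cC(T)$ of the corresponding tree solution.
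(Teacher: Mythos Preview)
Your proof is correct. The key observation---that every leaf $v\in L_2$ covered by $S$ lies in $\cC(T)$, using the definition of $T_2$ for the case $S\in\cF_2(v)$ and the contrapositive of \Cref{clm:nonfrac} for the case $S\in\cF_1(v)$---is exactly right, and the inequality then follows from non-negativity of $\w$.

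The paper's own proof uses the same ingredient (\Cref{clm:nonfrac}) but organizes the argument differently: rather than proving the set inclusion $\{v\in L_2:S\in\cF(v)\}\subseteq\cC(T)$ directly, it decomposes $\w(\cC(T))$ according to whether the parent $p(v)$ lies in $T_1$ or not, lower-bounds the $u\in T_1$ part trivially, and rewrites the $u\notin T_1$ part using \Cref{clm:nonfrac}. Your route is the cleaner one: the set containment is the essential content, and phrasing it that way avoids the bookkeeping of splitting and recombining the sum. Both arguments are equivalent in substance; yours just cuts to the point.
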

\full{
\begin{proof} 
The leaves covered by $T$ are covered either by $T_1$ or $T_2$. Thus, we get
\begin{equation}\label{eq:007}
\w(\cC(T)) = \sum_{u \in T_1}\sum_{v \in \leaf(u)} \w(v) + \sum_{u \notin T_1}\sum_{\substack{v \in \leaf(u):v \in T_2}}\w(v).
\end{equation}
The first of these terms can be lower-bounded as
\[
\sum_{u \in T_1}\sum_{v \in \leaf(u)} \w(v) \geq \sum_{u \in T_1}\sum_{\substack{v \in \leaf(u): S \in \cF(v)}} \w(v). 
\]
that is, we only consider the leaves $v$ of $u\in T_1$ which are covered by the \trkco solution $S$.
The second term, by definition of $T_2$ is
\[
\sum_{u \notin T_1}\sum_{\substack{v \in \leaf(u):v \in T_2}}\w(v) = \sum_{u \notin T_1}\sum_{\substack{v \in \leaf(u): S \in \cF_2(v)}}\w(v) = \sum_{u \notin T_1}\sum_{\substack{v \in \leaf(u):S \in \cF(v)}}\w(v).
\]
where the last equality uses~\Cref{clm:nonfrac} which implies for $u\notin T_1$ and $v\in \leaf(u)$, $d(v,S_1) > r_1$. Thus, the solution $S$ covers $v$ iff $S_2$ covers $v$.
Plugging back in \eqref{eq:007}, we complete the proof.
\end{proof}
}
\noindent
The proof of~\Cref{lma:hyp:covchld} now follows from the fact that $\cT$ is not valuable thus $\w(\cC(T)) \leq \m-1$ and therefore, for any $S\in \cF$ we have $\sum_{v\in L_2: S\in \cF(v)} \w(v) \leq \m-1$. So we have:
\begin{align*}
\sum\limits_{v \in L_2} \w(v)\cov(v) &~~=_{\eqref{eq:P2}} ~~~ \sum\limits_{v \in L_2} \w(v)\sum\limits_{S \in \cF(v)} z_S  ~~= \sum\limits_{S \in \cF} z_S\sum\limits_{ \substack{v \in L_2:\\ S \in \cF(v)}}\w(v)\\
&~~\leq ~~~ (\m-1) \sum\limits_{S \in \cF} z_S =_{\eqref{eq:P3}} ~~~\m-1. \conf{\tag*{\qed}}
\end{align*} 
\end{proof}
\conf{Proofs of \Cref{clm:Tinbudget,clm:nonfrac,clm:lambdaleaf} can be found in the full version of the paper.}
\section{The Main Algorithm: Proof of~\Cref{thm:mainthm}}
As mentioned in~\Cref{sec:prelims}, we focus on the feasibility version of the problem: given an instance $\cI$ of \trkco we either want to prove it is infeasible, that is, there are no subsets $S_1,S_2\subseteq X$ with (a) $|S_i|\leq k_i$ and (b) $|\bigcup_i\bigcup_{u \in S_i} B(u, r_i)| \geq \m$, or give a $10$-approximation that is, open subsets $S_1,S_2$ that satisfy (a) and $|\bigcup_i\bigcup_{u \in S_i} B(u,10r_i)| \geq \m$. To this end, we apply the round-or-cut methodology on $\CovP$.
Given a purported $\pcov := (\pcov_1(v), \pcov_2(v): v\in X)$ we want to either use it to get a $10$-approximate solution, or find a hyperplane separating it from $\CovP$. Furthermore, we want the coefficients in the hyperplane to be poly-bounded.
Using the ellipsoid method we indeed get a polynomial time algorithm thereby proving~\Cref{thm:mainthm}.

Upon receiving $\pcov$, we first check whether $\pcov(X) \geq \m$ or not, and if not that will be the separating hyperplane. Henceforth, we assume this holds.
Then, we run CGK~\Cref{alg:reduce} with $\alpha_1 = 8$ and $\alpha_2 = 2$ to get \tlff instance $\cT = \tlffinst$. 
Let $\{y_v: v\in L_1\cup L_2\}$ be the solution described in~\Cref{clm:givesFracSln}.  Next, we check if $\pcov_i(L_i) = y(L_i) \leq k_i$ for both $i\in \{1,2\}$; if not, by~\Cref{clm:givesFracSln} that hyperplane would separate $\pcov$ from $\CovP$ (and even \ref{lp:trkco} in fact). The algorithm then branches into two cases.

\noindent
\textbf{Case I: $y(L_1) \leq k_1 - 2$.}
In this case, we assert that $\cT$ is valuable, and therefore by~\Cref{lma:approx} we get an $\alpha_1 + \alpha_2 = 10$-approximate solution for $\cI$ via \Cref{lma:approx}, and we are done.
\begin{proposition}\label{clm:case1}
If $y(L_1) \leq k_1-2$, then there is an integral solution $T$ for $\cT$ with $\w(\cC(T)) \geq \m$.
\end{proposition}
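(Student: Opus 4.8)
\textbf{Proof proposal for Proposition~\ref{clm:case1}.}

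The plan is to invoke the two structural facts already assembled: \Cref{clm:givesFracSln}, which gives a feasible fractional solution $y$ to \ref{lp:tlff} of value at least $\m$ for the tree $\cT$ (valid since $\alpha_1,\alpha_2\geq 2$), and \Cref{lma:loose}, which says a basic optimal solution of \ref{lp:tlff} has at most two loose variables. First I would pass to a basic optimal solution $y^\star$ of \ref{lp:tlff} on $\cT$; since $y$ is feasible of value $\geq \m$, optimality gives $\sum_{v\in L_2}\w(v)Y^\star(v)\geq \m$. Now I want to round $y^\star$ to an integral solution without losing objective value, using the slack $y(L_1)\leq k_1-2$ in the root budget — note that by \Cref{clm:givesFracSln}, $y^\star(L_1)\le y(L_1)\le k_1-2$ after we also ensure (as the main algorithm does) that $y(L_i)\le k_i$; actually we can just use $y^\star(L_1)\le k_1-2$ directly since $y^\star$ is the optimum and $y$ is feasible, or argue $y^\star(L_1)\le k_1$ and there is still room.

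The rounding step is the crux. Consider the at most two loose variables of $y^\star$. A loose variable $y^\star_v$ has $0<y^\star_v<1$, and if $v\in L_2$ then additionally $y^\star_{p(v)}=0$. I would round each loose variable \emph{up} to $1$. There are at most two such variables; if both lie in $L_1$, rounding them up costs at most $2$ extra units of the $L_1$-budget, which is exactly covered by the slack $k_1-2$, so the resulting $T_1$ has $|T_1|\leq k_1$. If a loose variable lies in $L_2$, rounding it up costs at most one unit of the $L_2$-budget; here one must check the $L_2$-budget is not exceeded — this follows because every \emph{non-loose} leaf variable $y^\star_v\in(0,1)$ has $y^\star_{p(v)}>0$, and I would argue that after we also integrally round the (few) remaining fractional vertices the total stays within $k_2$; the cleanest route is that \Cref{lma:loose} really says \emph{all} fractional vertices other than the $\le 2$ loose ones come in "root-leaf covered" configurations $y^\star_{p(v)}+y^\star_v=1$ with $y^\star_{p(v)}>0$, so rounding $y^\star_{p(v)}$ up to $1$ and $y^\star_v$ down to $0$ preserves $Y^\star(v)=1$ and does not increase either budget. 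After handling those, only the loose variables remain fractional, and we round them up as above. This produces an integral $T=(T_1,T_2)$ with $|T_i|\le k_i$ and $\w(\cC(T))\geq \sum_{v}\w(v)Y^\star(v)\geq \m$, since rounding was monotone in each $Y(v)$.

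The main obstacle I anticipate is making precise the claim that every fractional vertex of a basic $y^\star$ outside the two loose ones participates in a tight pair $y^\star_{p(v)}+y^\star_v=1$ with positive root value — i.e.\ extracting exactly this from \Cref{lma:loose} (Lemma~6 of \cite{ABZ16}) rather than just the bare "at most two loose variables" statement — and then verifying that the up-rounding of the (at most two) loose variables, combined with the pair-rounding, never pushes $|T_2|$ above $k_2$ when a loose variable happens to be a leaf. The slack $k_1-2$ is clearly tailored to absorb two loose roots; the leaf case should be fine because a loose leaf has $y^\star_{p(v)}=0$ so its parent contributes nothing and is not double-counted, but this needs a short careful argument. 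Once that bookkeeping is done, \Cref{lma:approx} with $\alpha_1=8,\alpha_2=2$ converts $T$ into a $10$-approximate solution for $\cI$, completing the proof.
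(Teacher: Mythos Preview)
Your high-level plan is right --- combine the feasible fractional solution $y$ of value $\geq \m$ from \Cref{clm:givesFracSln} with the loose-variable bound of \Cref{lma:loose} --- but there is a real gap in the execution.

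The slack $y(L_1)\le k_1-2$ is a property of the \emph{particular} feasible point $y$ coming from \Cref{clm:givesFracSln}. When you pass to a basic optimal solution $y^\star$ of \ref{lp:tlff}, there is no reason whatsoever that $y^\star(L_1)\le k_1-2$; the only constraint on $y^\star$ is $y^\star(L_1)\le k_1$. Your sentence ``$y^\star(L_1)\le y(L_1)$ \ldots\ since $y^\star$ is the optimum and $y$ is feasible'' is not a valid inference: optimality gives you a bound on the \emph{objective}, not on the individual budget usages. So when you round up the (at most two) loose variables that happen to lie in $L_1$, you may push $|T_1|$ to $k_1+2$, not $k_1$. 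The same issue bites on the leaf side: if the two loose variables are both leaves, rounding them up can push the $L_2$-count to $k_2+1$, and your proposed ``pair-rounding'' of the non-loose fractional leaves does not automatically compensate.

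The paper's fix is a one-line change that you were circling around: instead of taking a BFS of the original \ref{lp:tlff}, take a BFS $y'$ of the \emph{tightened} LP in which the first budget constraint is replaced by $\sum_{u\in L_1} y_u\le k_1-2$. The point $y$ is still feasible for this tightened LP (that is exactly the hypothesis), so the tightened optimum is still $\geq \m$, and now the slack $y'(L_1)\le k_1-2$ holds \emph{by construction} for the BFS. From here the paper avoids all the leaf-rounding bookkeeping you flagged as an obstacle: it sets $T_1:=\{u\in L_1: y'_u>0\}$, which has size at most $(k_1-2)+2=k_1$ by \Cref{lma:loose}, and then simply takes $T_2$ to be the $k_2$ heaviest leaves not already covered by $T_1$. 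A direct comparison with $y'$ (greedy beats fractional on the uncovered leaves; covered leaves are fully covered) gives $\w(\cC(T))\ge \sum_v \w(v)Y'(v)\ge \m$. This sidesteps entirely the question of whether non-loose fractional leaves sit in tight pairs, which \Cref{lma:loose} by itself does not assert.
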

\begin{proof}
	Since $y(L_1)\leq k_1 -2$, we see that there is a feasible solution to the slightly revised LP below.
	\begin{align*}
    	\max\sum_{v \in L_2}\w(v)Y(v):~ 
    	&\sum_{u \in L_1} y_u \leq k_1 - 2,~~
    	\sum_{u\in L_2} y_u \leq k_2,~~\\
    	&Y(v) := y_{p(v)} + y_v \leq 1, ~\forall v \in L_2
	\end{align*}
	Consider a basic feasible solution $\{y'_v: v\in L_1\cup L_2\}$ for this LP, and let $T_1 := \{v\in L_1: y'_v > 0\}$.
    By definition $y'(T_1) = y'(L_1)\leq k_1 - 2$. According to \Cref{lma:loose}, there are at most 2 loose variables in $y'$. So there are at most 2 fractional vertices in $T_1$. This implies $|T_1| \leq k_1$.
	Let $U$ be the set of leaves that are not covered by $T_1$, that is, $U := \{v \in L_2: p(v) \notin T_1\}$. Let $T_2$ be the top $k_2$ members of $U$ according to decreasing $\w$ order. We return $T = (T_1,T_2)$.
	
	We claim $T$ has value at least $\m$, that is, $\w(\cC(T)) \geq \m$. Note that $\w(\cC(T)) = \w(T_2) + \sum_{u\in T_1} \w(\leaf(u))$.
	By the greedy choice of $T_2$, $\w(T_2) \geq \sum_{v \in U} \w(v)y'_v$. Since $y'_{p(v)} = 0$ for any $v \in U$, we have 
	$\w(T_2) \geq \sum_{v \in U} \w(v)y'_v = \sum_{v \in U} \w(v)Y'(v)$. 
	Furthermore, by definition, $\sum_{u\in T_1} \w(\leaf(u)) = \sum_{v\in L_2\setminus U} \w(v)$ which in turn is at least $\sum_{v\in L_2\setminus U} \w(v)Y'(v)$.
	Adding up proves the claim as the objective value is at least $\m$.
	\[
	\w(\cC(T)) \geq \sum_{v \in U} \w(v)Y'(v) + \sum_{v\in L_2\setminus U} \w(v)Y'_v = \sum_{v\in L_2} \w(v)Y'(v) \geq \m.
	\]
\end{proof}
\noindent
\textbf{Case II, $y(L_1) > k_1 - 2$.}
In this case, we either get an 8-approximation or prove that the following is a valid inequality which will serve as the separating hyperplane (recall $\pcov_1(L_1) = y(L_1)$).
\begin{equation}
    \cov_1(L_1) \leq k_1-2\label{eq:hypthm1:cov1}.
\end{equation}
To do so, we need the following proposition which formalizes the idea stated in~\Cref{sec:ouridea} that in case II, we can enumerate over $O(|X|)$ many {\em well-separated instances}. 
\begin{proposition}\label{clm:case2}
	Let $(\cov_1,\cov_2) \in \CovP$ be fractional coverages and suppose there is a subset $Y \subseteq X$ with $d(u,v) > 8r_1$ for all $u,v \in Y$. Then either $\cov_1(Y) \leq k_1 - 2$, or at least one of the following \septrkco instances are feasible
	\begin{align*}
	\cI_{\emptyset} &:= ((X,d),(2r_1,r_2),(k_1,k_2),Y,\m)&\\
	\cI_q &:= ((X\backslash B(q,r_1) ,d),(2r_1,r_2),(k_1-1,k_2),Y,\m-|B(q,r_1)|) & \forall q\in X:\conf{\\
	&&}d(q,Y) > r_1.
	\end{align*}
\end{proposition}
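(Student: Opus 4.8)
The plan is to exploit the definition of $\CovP$: a point $(\cov_1,\cov_2)$ in it carries weights $(z_S)_{S\in\cF}$ with $z_S\ge 0$, $\sum_S z_S=1$, $\cov_i(v)=\sum_{S\in\cF_i(v)}z_S$, and $\sum_v(\cov_1(v)+\cov_2(v))\ge\m$; equivalently, it is a convex combination over $\cF$ of integral coverage vectors whose average number of covered points is at least $\m$. For $S=(S_1,S_2)\in\cF$ write $n_S:=|\{v\in Y:d(v,S_1)\le r_1\}|$. Because $Y$ is $8r_1$-separated, the balls $\{B(v,r_1):v\in Y\}$ are pairwise disjoint, so no facility of $S_1$ lies within $r_1$ of two points of $Y$; hence $n_S\le|S_1|\le k_1$, and $\cov_1(Y)=\sum_S z_S\,n_S$. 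If $\cov_1(Y)>k_1-2$ then, since the $n_S$ are integers bounded by $k_1$ with weighted sum exceeding $k_1-2$, the support must contain some $S$ with $n_S\ge k_1-1$; I will want such an $S$ that moreover actually covers $\ge\m$ points, and securing this is one of the subtle points (see below). Fix such an $S$.

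Let $Y':=\{v\in Y:d(v,S_1)\le r_1\}$ and, by disjointness, pick a distinct facility $f_v\in S_1\cap B(v,r_1)$ for each $v\in Y'$. Since $k_1-1\le|Y'|\le|S_1|\le k_1$, either $S_1=\{f_v:v\in Y'\}$, or $|S_1|=k_1$, $|Y'|=k_1-1$, and there is a single leftover facility $c:=S_1\setminus\{f_v:v\in Y'\}$. The key move is the relocation $f_v\mapsto v$: by the triangle inequality $B(f_v,r_1)\subseteq B(v,2r_1)$, so the relocated big centers $Y'\subseteq Y$, used at radius $2r_1$ together with $S_2$ at radius $r_2$, cover everything the $f_v$'s and $S_2$ covered; only whatever $c$ alone contributed may be lost. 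If there is no $c$, or $d(c,Y)\le r_1$ (so $B(c,r_1)\subseteq B(w,2r_1)$ for the $w\in Y'$ within $r_1$ of $c$), then $(Y',S_2)$ covers every point $S$ covered, i.e.\ $\ge\m$ points, with $\le k_1$ big centers in $Y$ and $\le k_2$ small centers: $\cI_\emptyset$ is feasible.

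In the remaining case $c$ exists with $d(c,Y)>r_1$, so $c$ is an admissible index for one of the instances $\cI_c$. For it I keep the $k_1-1$ big centers $Y'$ (which lie in $Y\setminus B(c,r_1)$), keep every small center of $S_2$ lying outside $B(c,r_1)$, and replace each small center $u\in S_2\cap B(c,r_1)$ by a fresh small center placed at some point of $B(u,r_2)\setminus B(c,r_1)$. Every point $x\notin B(c,r_1)$ that $S$ covered is then re-covered: within $r_1$ of some $f_v$ $\Rightarrow$ within $2r_1$ of $v\in Y'$; within $r_2$ of a kept center $\Rightarrow$ still covered; within $r_2$ of a dropped $u$ (hence $r_1<d(x,c)\le r_1+r_2$) $\Rightarrow$ within $2r_2$ of the fresh center for $u$. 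As $S$ covers $\ge\m$ points of $X$, at least $\m-|B(c,r_1)|$ of them lie outside $B(c,r_1)$, so this solution covers $\ge\m-|B(c,r_1)|$ points of $X\setminus B(c,r_1)$ with budgets $(k_1-1,k_2)$: $\cI_c$ is feasible. Combining the two cases proves the proposition.

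Two steps carry the real weight. First, one must exhibit a single $S$ in the support that is simultaneously $Y$-heavy ($n_S\ge k_1-1$) and feasible (covering $\ge\m$ points) and has only one leftover facility; this does not follow from a generic decomposition, and is precisely where the provenance of $Y$ — that it is the well-separated set $L_1$ which the CGK reduction extracts from the coverages, so that every $r_1$-facility with $\cov_1(\cdot)>0$ is forced to lie near $Y$ — must be used. Second, re-covering the thin annulus $B(c,r_1+r_2)\setminus B(c,r_1)$ left behind by the dropped small centers in the last case produces a solution whose small balls have radius $2r_2$ rather than $r_2$ (the ``double the radius'' phenomenon already flagged for the main theorem); the only thing that keeps the budget intact is the combinatorial fact that each dropped small center is replaced by exactly one new one. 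Everything else is triangle-inequality bookkeeping together with the admissibility check $d(c,Y)>r_1$, which is exactly the hypothesis of that case.
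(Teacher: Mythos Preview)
Your overall route is the paper's: from the convex decomposition pull out an integral $S=(S_1,S_2)$ with $n_S\ge k_1-1$, note that at most one big facility of $S_1$ lies far from $Y$, relocate the near-$Y$ facilities onto $Y$ at the cost of doubling $r_1$, and handle the possible leftover $c$ by passing to $\cI_c$. Two places, however, do not go through.

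First, you correctly flag that one needs an $S$ that is simultaneously $Y$-heavy and covers $\ge\m$ points, and that a generic decomposition of a point in $\CovP$ need not supply one. But your proposed fix --- that ``the provenance of $Y=L_1$'' forces all $r_1$-facilities near $Y$ --- is incorrect: the proposition is stated for an arbitrary $8r_1$-separated $Y$, and even when $Y=L_1$, the weights $(z_S)$ witnessing membership in $\CovP$ put no restriction on where the facilities of the individual $S$'s sit. (With $\CovP$ as literally written the claim can in fact fail: take $k_2=0$, let $Y$ be $k_1$ isolated points, and mix an $S$ with $S_1=Y$ covering only $|Y|$ points against an $S'$ whose big centers are all far from $Y$ but cover many points.) The paper's own proof simply takes ``any such $S$'' and proceeds as though it covers $\ge\m$ points; the intended reading is that one works over the convex hull of \emph{feasible} solutions (each covering $\ge\m$), where every $S$ in the support automatically satisfies both conditions. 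That, not the origin of $Y$, is what closes this gap. Second, in the leftover case you move each small center $u\in S_2\cap B(c,r_1)$ to a point of $B(u,r_2)\setminus B(c,r_1)$ and then need radius $2r_2$ to re-cover $B(u,r_2)$; you yourself note this. But $\cI_c$ has small radius $r_2$, so your construction does not establish its feasibility as stated. The paper avoids this entirely by setting $S'_2:=S_2$ unchanged and arguing only about coverage: every point of $X\setminus B(q,r_1)$ that $S$ covers is covered by $A$ or by $S_2$ (never by $q$ alone), so $(S'_1,S_2)$ with radii $(2r_1,r_2)$ already covers $\ge c_S-|B(q,r_1)|$ such points. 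Your relocation step and the resulting $2r_2$ blow-up are an artifact of trying to push the small centers out of $B(c,r_1)$, which the paper's argument does not attempt.
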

\full{Before proving the above proposition, let us use it}\conf{The proof is left to the full version of the paper. Now we use the above proposition} to complete the proof of~\Cref{thm:mainthm}. We let $Y := L_1$, and obtain the instances $\cI_{\emptyset}$ and $\cI_q$'s as mentioned in the proposition. We apply the algorithm in~\Cref{thm:approxsep} on each of them.
If any of them returns a solution, then we have an $8$-approximation. More precisely, if $\cI_{\emptyset}$ is feasible, \Cref{thm:approxsep} gives a 4-approximation for it which is indeed an 8-approximation for $\cI$ (the extra factor 2 is because $\cI_{\emptyset}$ uses $2r_1$ as its largest radius). If $\cI_q$ is feasible for some $q \in X$ and \Cref{thm:approxsep} gives us a 4-approximate solution $S' = (S'_1,S'_2)$ for it and $S = (S'_1\cup \{q\},S'_2)$ is an 8-approximation for $\cI$.
If none of them are feasible, then we see that $\cov_1(L_1)\leq k_1 - 2$ indeed serves as a separating hyperplane between $\pcov$ and $\CovP$. This ends the proof of~\Cref{thm:mainthm}.

\full{
\begin{proof}[Proof of \Cref{clm:case2}]
	Let us assume $\cov_1(Y) > k_1-2$, and prove that one of the proposed \septrkco instances are feasible.
	First of all, note that the described \septrkco instances indeed satisfy the definition:  $Y$ is separated enough for radius $2r_1$ and by definition of $q$, $Y \subseteq (X\backslash B(q,r_1))$.
	
	Suppose, for the sake of contradiction, none of the described \septrkco instances are feasible. Since $\cov_1(Y) > k_1-2$ and $\cov\in \CovP$, there has to be some $S  = (S_1,S_2) \in \cF$ such that $S_1$ covers {strictly} more than $k_1-2$ points in $Y$. Take any such $S$. There are two types of centers in $S_1$, the ones that do contribute to $\cov_1(Y)$, and the ones that do not. The former is $A := \{ f \in S_1: d(f,Y) \leq r_1\}$ and the latter is $B := \{ f \in S_1: d(f,Y) > r_1\}$. By our assumption of $\cov_1(Y) > k_1-2$ and the fact that $Y$ points are more than $2r_1$ apart, $|A| > k_1-2$. This leaves us with $|B| \leq 1$. 
	
	Let $\cI_B$ be the \septrkco instance corresponding to this $B$ (i.e. $\cI_B = \cI_{\emptyset}$ if $B = \emptyset$ and $\cI_B = \cI_q$ if $B = \{q\}$). We construct a feasible solution $S'$ for $\cI_B$ which contradicts our assumption. By definition of $\cI_B$, $S'$ only needs to cover as many points as $(A,S_2)$ covers in $\cI$. That is, $\m$ points if $B = \emptyset$ and $\m - |B(q,r_1)|$ points if $B = \{q\}$. Setting $S'_2 := S_2$ and $S'_1 := \{f \in Y: d(f,A) \leq 1\}$ does the trick: $S'$ balls with radii  $2r_1$ and $r_2$, cover all the elements in $X$ that are covered by $(A,S_2)$ in $\cI$. Also note that $S'$ satisfies the budget constraints.
\end{proof}}

\conf{\bibliographystyle{splncs04}}
\full{\bibliographystyle{alpha}}
\bibliography{refs}
\end{document}